
\documentclass[10pt,journal]{IEEEtran}
\usepackage{times}

\usepackage{amsmath}
\usepackage{amsfonts}
\usepackage{latexsym}
\usepackage{amssymb}

\usepackage{upref}
\usepackage{theorem}
\usepackage{graphicx}
\usepackage{psfrag}
\usepackage{multirow}
\usepackage{epstopdf}

\usepackage{txfonts}
\usepackage{color}
\usepackage{enumerate}

\usepackage{comment}
\usepackage{soul}
\usepackage[normalem]{ulem}
\usepackage[ruled,noline]{algorithm2e}


\hbadness = 10000
\vbadness = 6000
\hfuzz = 2pt

\textheight 9.4in




\theoremstyle{plain}
\theorembodyfont{\normalfont\slshape}

\newtheorem{thm}{Theorem$\!$}
\newenvironment{theorem}
{\begin{thm}\hspace*{-1ex}{\bf.}}{\end{thm}}

\newtheorem{lem}[thm]{Lemma$\!$}
\newenvironment{lemma}{\begin{lem}\hspace*{-1ex}{\bf.}}{\end{lem}}

\newtheorem{prop}[thm]{Proposition$\!$}
\newenvironment{proposition}{\begin{prop}\hspace*{-1ex}{\bf.}}{\end{prop}}

\newtheorem{cor}[thm]{Corollary$\!$}

\newtheorem{defn}[thm]{Definition$\!$}
\newenvironment{definition}{\begin{defn}\hspace*{-1ex}{\bf.}}{\end{defn}}

\newtheorem{xmpl}[thm]{Example$\!$}
\newenvironment{example}{\begin{xmpl}\hspace*{-1ex}{\bf.}}{\end{xmpl}}

\newtheorem{prob}[thm]{Problem$\!$}

\setlength\theorempreskipamount{5pt plus 5pt minus 3pt}
\setlength\theorempostskipamount{5pt plus 3pt minus 1.5pt}



\newcounter{enumrom}
\renewcommand{\theenumrom}{(\roman{enumrom})}


\makeatletter
\renewcommand{\@endtheorem}{\endtrivlist}
\makeatother



\newcommand\blfootnote[1]{%
  \begingroup
  \renewcommand\thefootnote{}\footnote{#1}%
  \addtocounter{footnote}{-1}%
  \endgroup
}

\makeatletter
\renewcommand{\thefigure}{{\@arabic\c@figure}}
\renewcommand{\fnum@figure}{{\bf Figure\,\thefigure}}
\makeatother




\newcommand{\cC}{{\cal C}}
\newcommand{\cD}{{\cal D}}

\newcommand{\cO}{{\cal O}}
\newcommand{\cP}{{\cal P}}

\newcommand{\cR}{{\cal R}}
\newcommand{\cS}{{\cal S}}

\newcommand{\cW}{{\cal W}}





\newcommand{\be}[1]{\begin{equation}\label{#1}}
\newcommand{\ee}{\end{equation}}


\renewcommand{\leq}{\leqslant}

\renewcommand{\geq}{\geqslant}




\newcommand{\Cref}[1]{Co\-ro\-lla\-ry\,\ref{#1}}




\DeclareMathAlphabet{\mathbfsl}{OT1}{cmr}{bx}{it}

\newcommand{\vect}[1]{\boldsymbol{#1}}
\newcommand{\cellevel}{c}
\newcommand{\hsection}{\cS}
\newcommand{\burst}{b}
\newcommand{\cost}{\cC}
\newcommand{\tree}{\cW}

\newcommand{\cellblocksize}{n}



\outer\def\proclaim #1. #2\par{\medbreak
 \noindent{\bf#1.\enspace}{\sl#2\par}%
 \ifdim\lastskip<\medskipamount \removelastskip\penalty55\medskip\fi}

\mathchardef\inn="3232
\renewcommand{\in}{{\,\inn\,}}


\begin{document}

\sloppy

\title{A Constrained Coding Scheme for Correcting Asymmetric Magnitude-$1$ Errors in $q$-ary Channels}


\author{  \textbf{Evyatar Hemo},\emph{ Graduate Student Member, IEEE,} and \textbf{Yuval Cassuto},\emph{ Senior Member, IEEE}\\}

\maketitle




\maketitle

\begin{abstract}
We present a constraint-coding scheme to correct asymmetric magnitude-$1$ errors in multi-level non-volatile memories. For large numbers of such errors, the scheme is shown to deliver better correction capability compared to known alternatives, while admitting low-complexity of decoding. Our results include an algebraic formulation of the constraint, necessary and sufficient conditions for correctability, a maximum-likelihood decoder running in complexity linear in the alphabet size, and upper bounds on the probability of failing to correct $t$ errors. Besides the superior rate-correction tradeoff, another advantage of this scheme over standard error-correcting codes is the flexibility to vary the code parameters without significant modifications.
\end{abstract}

\section{Introduction}
\blfootnote{Evyatar Hemo (email: evyatarhemo@gmail.com) and Yuval Cassuto (email: ycassuto@ee.technion.ac.il) are with the Andrew and Erna Viterbi Faculty of Electrical Engineering, Technion - Israel Institute of Technology, Haifa $3200003$, Israel.}
The advent of multi-level non-volatile memories (NVMs) has introduced a plethora of coding problems on a variety of $q$-ary channels. Specifically, flash-memory based storage with multiple-level cells has become a fertile ground for addressing these coding problems. In the flash technology, a memory level is determined by the amount of electrical charge trapped inside a floating gate transistor. Equivalently, each memory level is represented by a certain value of voltage or current. However, due to a variety of factors such as noise, manufacturing variability, and inter-cell interference, it becomes very challenging to scale the number of levels while keeping the voltage dynamic range fixed. As a result, the gaps between memory levels become narrower, and a growing error rate compromises the data fidelity.
\blfootnote{Part of the results of this paper was presented at the 2015 International Symposium on Information Theory held in Hong-Kong~\cite{ISIT2015}.}
Errors resulting from narrowing margins between levels have been an active subject of research in recent years. In particular, errors with {\em asymmetry} and {\em magnitude limit} have emerged as useful coding models to tackle this problem. Asymmetry of the errors stems from the inherent asymmetry between the write and erase operations in flash, which often renders one error direction much more dominant than the other. The limited magnitudes of the errors come from ``incremental'' error sources such as retention errors (slow decay of charge level) or disturbs (gradual level increments from writes to adjacent cells)~\cite{errors1},\cite{errors2}.

As the density scaling progresses, we expect that {\em asymmetric, magnitude-$1$} errors will be the first to become the dominant reliability issue of multi-level NVMs. Hence our objective is to develop a coding scheme that will deal with a large number of asymmetric magnitude-$1$ errors within a code block. The coding scheme we propose herein will be evaluated in comparison to known coding techniques for asymmetric magnitude-$1$ errors, which we now briefly survey. In the extreme scenario, we can use an {\em all asymmetric magnitude-$1$} error correcting code, which is equivalent to using only the even cell levels out of the $q$ supported levels~\cite{AhlswedeR:02}. A slightly less redundant code, correcting asymmetric magnitude-$1$ errors in up to {\em half} of the code coordinates, can be obtained by the construction of~\cite{asym} when used with the binary repetition code. This option is equivalent to using only even levels or only odd levels within each codeword -- hence termed {\em even/odd} in the sequel. When the code block length is not too large, we may also use the construction of~\cite{asym} with lower-redundancy binary codes, such as {\em BCH codes}, to achieve a better tradeoff between rate and correctability. A generalization of~\cite{asym}, to more general error distributions can be found in~\cite{Yue2}. Finally, another alternative is to use codes coming from the recently developed theory for {\em L1 distance} codes reported in a sequence of papers by Bose, Tallini and others. See for example~\cite{bose} and citations thereof. In the sequel we show that for moderate to high error rates, our coding scheme gives a better correctability-rate tradeoff compared to the existing alternatives.

The proposed scheme, based on constraint coding, encodes data in a way that asymmetric magnitude-$1$ errors will be easily detected and corrected, but with milder restrictions than the all-errors correcting all-even code or the half-errors correcting even/odd code. The scheme is called {\em Non-Consecutive Constraint (NCC)}, attesting to the enforced constraint of disallowing the codeword to contain two adjacent levels from $\{0,\ldots,q-1\}$. The NCC scheme has three main advantages:
\begin{enumerate}
  \item Performance: the NCC outperforms alternative coding schemes for random errors, when the error rate is moderate to high.
  \item Complexity: the decoder of the NCC has low complexity, and achieves the performance of the maximum likelihood (ML) decoder.
  \item Flexibility: the tradeoff between the redundancy of the code and its error correction capabilities can be easily adjusted without re-designing the code or altering the decoder and encoder.
\end{enumerate}
Section~\ref{sec:asym_errors} describes common coding schemes for asymmetric errors, Section~\ref{sec:NCC} presents the NCC coding scheme and its algebraic formulation. It is an important fact that the resilience of a codeword against asymmetric magnitude-1 errors depends only on its {\em histogram} (how many code-positions have each $q$-ary symbol), and thus the algebraic study of the code works in the domain of word histograms. A maximum-likelihood, low complexity decoding algorithm for the NCC is presented in Section~\ref{sec:dec2}, and detailed performance analysis appears in Section~\ref{sec:perfrom}. Finally, finite block-length analysis for the corresponding information-theoretic model -- the $q$-ary Z-channel -- is presented in Section~\ref{sec:model}, and showing the good performance of the proposed coding scheme.


\section{Asymmetric Errors with Magnitude 1} \label{sec:asym_errors}
As described in the introduction, we assume throughout the paper that all errors are asymmetric with magnitude $1$, in the \emph{downward} direction. Obviously, when restricting the error pattern to asymmetric errors with magnitude $1$, using standard symmetric-error $q$-ary error correcting codes is wasteful. A general recipe for converting a standard error-correcting code to a higher-rate version of the code designed for asymmetric errors with magnitude $1$ can be found in~\cite{asym}. The idea behind~\cite{asym} in a nutshell is that in order to protect an $\cellblocksize$-cell $q$-ary memory block from asymmetric errors with magnitude $1$, we can convert the $q$-ary symbol to its binary form, encode only the least significant bits (LSBs) with a standard binary error correcting code, and map back to the $q$-ary form. By using this method, it is possible to use well-known codes and achieve very high-rate coding schemes.
When using this recipe with the binary repetition code we get a $q$-ary code where in each codeword either all levels are even or all are odd. This scheme allows the correction of $\left\lfloor \frac{\cellblocksize-1}{2} \right\rfloor$ asymmetric magnitude-$1$ errors. Later in the paper we refer to this coding scheme as the \textbf{even/odd} code.
\begin{example}\label{ex:EvenOdd}
The following $q=8$, $\cellblocksize=4$ words $(2,2,4,0)$ and $(1,7,3,1)$ are legitimate even/odd codewords. However, $(2,2,4,1)$ is not a legitimate codeword since it contains both odd and even values. Notice that if an asymmetric error with magnitude $1$ occurs in the first coordinate of the first codeword we get $(1,2,4,0)$. After reading such a word we can definitely deduce that an error occurred. There are three even values and a single odd value, therefore, by taking the majority it is clear that the error occurred in the first coordinate and the original codeword can be successfully restored.
\end{example}
Another simple coding scheme is the \textbf{all-even} code~\cite{AhlswedeR:02}, which is obtained by using only the even levels from the alphabet, and which can correct $\cellblocksize$ asymmetric magnitude-$1$ errors.
\section{The Non-Consecutive Constraint (NCC)} \label{sec:NCC}
We now present the {\bf Non-Consecutive Constraint} (NCC) coding scheme, which is specifically designed to correct large numbers of asymmetric errors with magnitude $1$.
Let the state of the memory cell be represented as a discrete cell level $\cellevel$, taken from the integer set $\{0,\ldots,q-1\}$.
\begin{definition}
The $NCC(\cellblocksize,q)$ is a constraint in which every $\cellblocksize$-cell, $q$-level memory block does not contain cells with consecutive memory levels.
\label{df:NCC}
\end{definition}
\begin{example}\label{ex:NCC}
The following $q=8$, $\cellblocksize=8$ codeword $(2,5,7,0,2,0,4,4)$ is not an NCC codeword (because $4,5$ are consecutive levels and both appear in the codeword). In contrast, $(2,4,4,0,2,0,4,7)$ is an NCC codeword.
\end{example}
While the definition of the NCC is relatively simple, its error-correcting capabilities are not straightforward, since correctability depends on both the codeword and the error pattern.
\begin{definition}
A \textbf{Maximum-Likelihood Decoder} for the NCC decodes a given word to an NCC codeword that requires minimal number of magnitude-$1$ \emph{upward} corrections.
\end{definition}
\begin{example}\label{ex:NCC2}
Let us take the following $q=8$, $\cellblocksize=10$ NCC codeword $(6,6,6,6,6,2,2,2,2,2)$. Let us assume two of the five $6$'s suffer an asymmetric magnitude-1 error, thus we get the received word of $(5,5,6,6,6,2,2,2,2,2)$. Due to the violation of the NCC, it is easy to notice that this word is erroneous, meaning that either two $6$'s have suffered an error and became $5$, or three $7$'s have suffered an error and became $6$. The first option requires fewer magnitude-$1$ upward corrections, therefore selected. In a similar way, we can correct any other combination of two asymmetric magnitude-$1$ errors. However, in the $\cellblocksize=4$ codeword $(6,6,2,2)$ even a single error cannot be corrected. If one of the $6$'s suffers an error, for example, the erroneous word is $(5,6,2,2)$. In this case the decoder cannot choose between two maximum likelihood NCC codewords $(6,6,2,2)$ and $(5,7,2,2)$, each requiring one magnitude-$1$ upward correction.
\end{example}
\subsection{Information rate}
\begin{definition}
The \emph{information rate} $\cR$ of a code for $\cellblocksize$-cell, $q$-level memory array is defined as
\begin{equation}
\cR=log_q(M)/n, \nonumber
\end{equation}
where $M$ is the number of legal combinations according to the code specification. $\cR$ represents the number of information $q$-ary symbols stored per physical cell.
\label{df:rate}
\end{definition}
\begin{theorem}
The information rate of the $NCC(\cellblocksize,q)$ is given by
\begin{equation}
\cR_{NCC}\left(\cellblocksize,q\right)= \frac{1}{n}\log_q \left[\sum_{k=1}^{\frac{q}{2}} k!\cdot S\left(n,k\right) \cdot \binom{q-k+1}{k} \right],
\label{eq:NCCrate}
\end{equation}
where $S(\cellblocksize,k)$ is the Stirling number of the second kind~\cite{vanLintJ:01}.
\label{th:NCCRate}
\end{theorem}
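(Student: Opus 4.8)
The plan is to reduce the whole statement to a single counting problem. By Definition~\ref{df:rate} the rate is $\cR_{NCC}=\frac{1}{\cellblocksize}\log_q M$, where $M$ is the number of legal $NCC(\cellblocksize,q)$ codewords, so it suffices to establish the combinatorial identity
\[
M=\sum_{k=1}^{q/2} k!\cdot S(\cellblocksize,k)\cdot\binom{q-k+1}{k}.
\]

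The structural observation that organizes the count is that, by Definition~\ref{df:NCC}, a word is a codeword if and only if its \emph{support} --- the set of distinct levels that actually occur in it --- contains no two consecutive integers of $\{0,\ldots,q-1\}$. Hence a codeword is determined by two independent pieces of data: the support set $A\subseteq\{0,\ldots,q-1\}$, which must avoid consecutive pairs, and the manner in which the $\cellblocksize$ coordinates are filled using every level of $A$ at least once. I would therefore stratify the count by $k=\abs{A}$ and, for each $k$, multiply the number of admissible supports of size $k$ by the number of words with a given such support.

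For the second factor, the words of length $\cellblocksize$ whose support is a \emph{fixed} $k$-element set $A$ are exactly the surjections from the $\cellblocksize$ coordinates onto $A$, of which there are $k!\,S(\cellblocksize,k)$: the Stirling number $S(\cellblocksize,k)$ partitions the coordinates into $k$ nonempty classes and the $k!$ assigns the levels of $A$ to these classes. For the first factor, the admissible supports of size $k$ are the $k$-subsets of $\{0,\ldots,q-1\}$ with no two elements differing by $1$; writing such a subset as an increasing tuple $a_1<\cdots<a_k$, the classical gap bijection $a_i\mapsto a_i-(i-1)$ carries these onto arbitrary $k$-subsets of a $(q-k+1)$-element set, giving $\binom{q-k+1}{k}$. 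Taking the product and summing over $k$ yields $M$. The summation stops at $k=q/2$ because, for even $q$, the largest subset of $\{0,\ldots,q-1\}$ with no consecutive pair has exactly $q/2$ elements (e.g.\ all even levels); equivalently $\binom{q-k+1}{k}=0$ once $k>q/2$.

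The one step requiring care is the clean separation of a codeword into the pair (support, surjection), namely that this correspondence is a genuine bijection: every codeword has a unique support $A$, that support is admissible precisely because the word satisfies the NCC, and reading the word as a map onto $A$ recovers the surjection, so each valid pair is produced exactly once and no codeword is double-counted. Once this is in place the two enumerations are standard and I expect no further difficulty. Substituting the resulting value of $M$ into $\cR_{NCC}=\frac{1}{\cellblocksize}\log_q M$ gives \eqref{eq:NCCrate} and completes the argument.
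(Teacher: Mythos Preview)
Your proposal is correct and mirrors the paper's own proof: both stratify by the support size $k$, count the admissible supports of non-consecutive levels as $\binom{q-k+1}{k}$, count the words with a fixed size-$k$ support as the $k!\,S(\cellblocksize,k)$ surjections, and sum over $k$. Your write-up merely adds a bit more justification (the gap bijection and the reason the sum stops at $q/2$), but the approach is identical.
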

\begin{proof}
The number of combinations to occupy $k$ non-consecutive levels out of $q$ levels is $\binom{q-k+1}{k}$~\cite{vanLintJ:01}(Ch.13). By multiplying it by the number of surjections from $\cellblocksize$-cell set to $k$-level set we obtain the desired combinations count. The number of surjections from an $n$-set to a $k$-set equals $k!\cdot S\left(n,k\right)$~\cite{Stirling2}. Summing all possible values of $k$ gives exactly the expression inside the $\log$ in~\eqref{eq:NCCrate}.
\end{proof}
In order to compare between the NCC and the even/odd and all-even coding schemes, we now quote the information rates for the other methods.
Due to the fact that for an even/odd codeword all $\cellblocksize$ cells are either all even or all odd, the number of possible combinations is given by $M=2\cdot (q/2)^n$, assuming even $q$. Therefore, the information rate of the $\cellblocksize$-cell, $q$-level even/odd code is given by
\begin{equation}
\cR_{even/odd}\left(\cellblocksize,q\right)= 1-\frac{\cellblocksize-1}{\cellblocksize}\log_q 2.
\label{eq:EvenOddrate}
\end{equation}
The all-even code has an information rate that is independent of $\cellblocksize$, and is equal to $1-log_q 2$. As a result, unlike the NCC and even-odd codes above, it is not possible to change the rate-correction tradeoff of all-even by varying its block size.
Fig.~\ref{fig:rate_vs_n} presents a comparison between the information rates of the three coding schemes as a function of the codeword length $\cellblocksize$. It can be observed that the information rate of the NCC is higher than even/odd and all-even for all values of $\cellblocksize$. However, for higher values of $\cellblocksize$ the difference between the codes gets smaller. The rate hierarchy reflected in Fig.~\ref{fig:rate_vs_n} is no coincidence: the three codes are related through the proper inclusion of their codewords: $\text{all-even}\left(n,q\right) \subset \text{even/odd}\left(n,q\right) \subset NCC\left(n,q\right)$.
\begin{figure}[htbp]
   \centering
   \includegraphics[height=2.1in,keepaspectratio=true,width=0.5\textwidth]{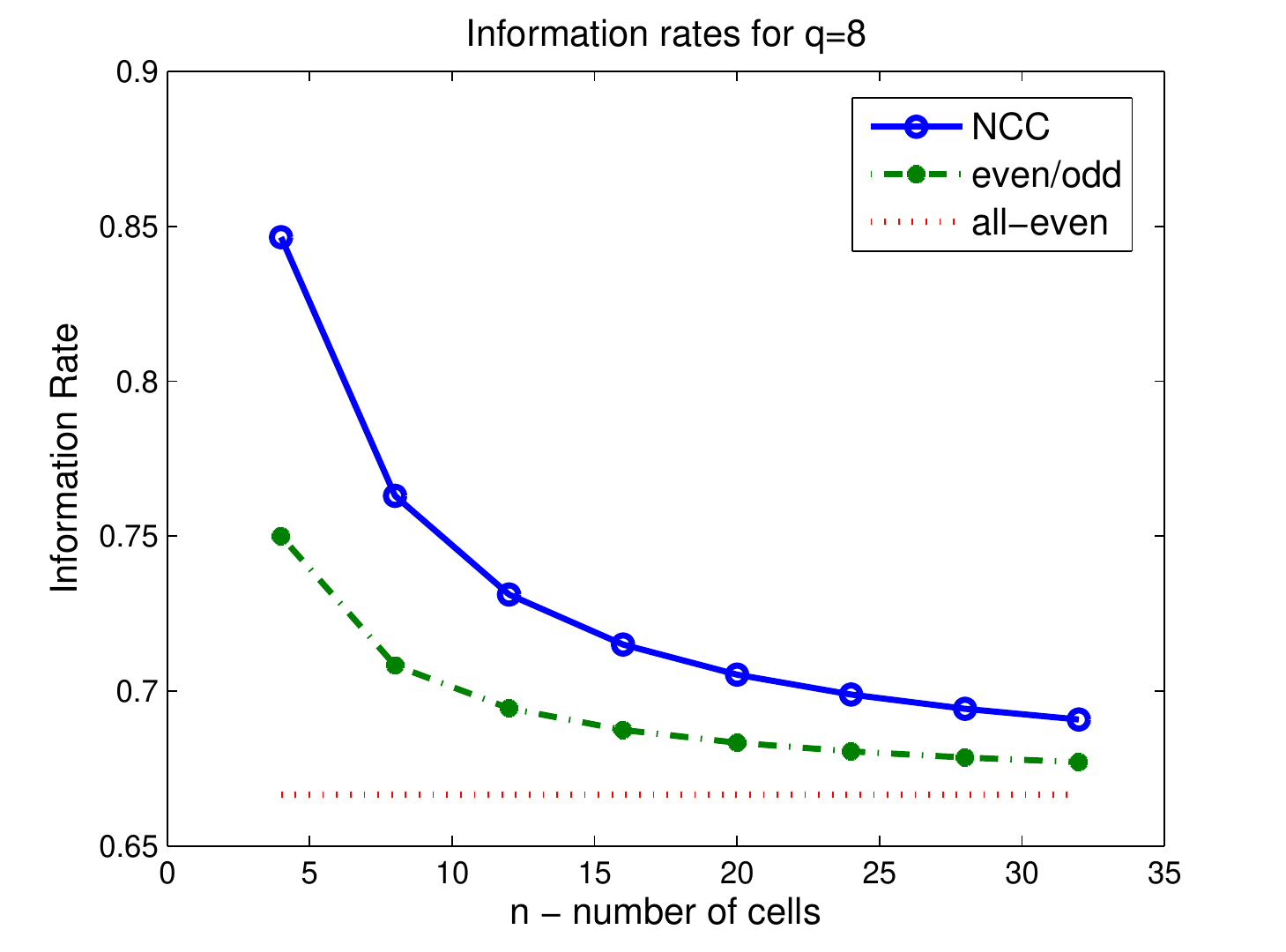}
   \caption{Information rates vs. number of cells per block for NCC (solid - circles) even/odd scheme (dash-dotted - stars) and all-even scheme (dotted) }
   \label{fig:rate_vs_n}
\end{figure} \\
\begin{theorem} \label{th:proofApp}
When $\cellblocksize\rightarrow \infty$ and $q$ is fixed, the NCC, even/odd and all-even schemes have the same information rate.
\end{theorem}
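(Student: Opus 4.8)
The plan is to show that all three information rates converge to the common limit $1-\log_q 2$ as $n\to\infty$. Two of the three cases are immediate from their closed forms. The all-even rate equals $1-\log_q 2$ for every $n$, so it is already constant in $n$. For the even/odd scheme, \eqref{eq:EvenOddrate} gives $\cR_{even/odd}(n,q)=1-\frac{n-1}{n}\log_q 2$, and since $\frac{n-1}{n}\to 1$ this tends to $1-\log_q 2$. Hence the theorem reduces entirely to proving that $\cR_{NCC}(n,q)\to 1-\log_q 2$.

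For the NCC I would study the growth in $n$ (with $q$ fixed) of the bracketed sum $M_{NCC}=\sum_{k=1}^{q/2} k!\,S(n,k)\binom{q-k+1}{k}$ appearing in \eqref{eq:NCCrate}. The crucial fact is that $k!\,S(n,k)$, the number of surjections from an $n$-set onto a $k$-set, satisfies $k!\,S(n,k)=k^n\bigl(1+o(1)\bigr)$ for fixed $k$; this follows from the inclusion--exclusion expansion $k!\,S(n,k)=\sum_{j=0}^{k}(-1)^j\binom{k}{j}(k-j)^n$, in which the leading term $k^n$ dominates the remaining ones by an exponential factor. Because the number of summands in $M_{NCC}$ is the fixed quantity $q/2$ and each coefficient $\binom{q-k+1}{k}$ is a positive constant depending only on $q$, the sum is governed by its largest-base term, which is $k=q/2$.

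I would then sandwich $M_{NCC}$ between two constant multiples of $(q/2)^n$. For the upper bound, replacing each surjection count by $k!\,S(n,k)\le k^n\le (q/2)^n$ and absorbing the finitely many binomial factors into a constant $C_q$ gives $M_{NCC}\le C_q\,(q/2)^n$. For the lower bound, keeping only the $k=q/2$ term and using $(q/2)!\,S(n,q/2)\sim (q/2)^n$ together with $\binom{q/2+1}{q/2}=q/2+1>0$ gives $M_{NCC}\ge c_q\,(q/2)^n$ for all large $n$, with $c_q>0$. Applying $\frac{1}{n}\log_q(\cdot)$, the constant factors contribute only an $O(1/n)$ term, so $\cR_{NCC}(n,q)\to\log_q(q/2)=1-\log_q 2$, matching the other two schemes.

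The main obstacle, such as it is, lies in justifying cleanly that the $k=q/2$ term strictly dominates every smaller-$k$ term; once the surjection estimate $k!\,S(n,k)=k^n(1+o(1))$ is in hand, the squeeze is routine, and the remaining ingredients --- the finite number of summands and the boundedness of the binomial coefficients for fixed $q$ --- are elementary. I would therefore expect the proof to be short, with the surjection asymptotics doing essentially all the work.
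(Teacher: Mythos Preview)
Your proposal is correct and rests on the same key asymptotic as the paper, namely that for fixed $k$ the surjection count satisfies $k!\,S(n,k)=k^n(1+o(1))$, so the $k=q/2$ summand dominates $M_{NCC}$. The packaging, however, differs. You compute the common limiting value $1-\log_q 2$ explicitly and sandwich $M_{NCC}$ between two constant multiples of $(q/2)^n$, which is clean and short. The paper instead introduces the subcodes $NCC_{q/2}(n,q)$ and $even/odd_{q/2}(n,q)$ and proves two lemmas: first that these subcodes differ in size only by the constant factor $(q/2+1)/2$ and hence have the same asymptotic rate, and second that the probability mass of $NCC(n,q)$ concentrates on $NCC_{q/2}(n,q)$ as $n\to\infty$. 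Your route avoids the subcode machinery and the probabilistic phrasing entirely, at the mild cost of not exhibiting the structural fact that almost every NCC codeword occupies exactly $q/2$ levels; the paper's decomposition makes that fact explicit, which it later reuses in the asymptotic analysis of Theorem~\ref{th:asym}.
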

The proof of Theorem~\ref{th:proofApp} appears in Appendix~\ref{app:A}. Even though the NCC has the same rate as the even/odd and all-even schemes in the limit, we show in the sequel that for finite $n$ values it can give better rate/correction tradeoff.
\subsection{Algebraic formulation of the NCC} \label{sec:formalism}
In order to better understand and analyze the NCC coding scheme, we turn to a more formal algebraic definition of the constraint. In the sequel, inequalities involving vectors are interpreted element-wise.
\begin{definition}
Given a vector of cell levels $\vect{c}=(c_1,\ldots,c_n)$, with $c_s\in\{0,\ldots,q-1\}$, define the {\bf histogram vector} as the length-$q$ column vector $\vect{h}$ in which the $i$-th element $i\in\{0,\ldots,q-1\}$ is the number of cells that occupy level $i$ in $\vect{c}$. That is, $h_i=\left| \left\{s\in \{1,\ldots,\cellblocksize\} |c_s=i\right\}\right|$ .
\end{definition}

\begin{example}\label{ex:histogram}
For the following $q=8$, $\cellblocksize=8$ cell-level vector $\vect{c}=(2,2,4,5,2,2,4,0)$, we have $\vect{h}=[1,0,4,0,2,1,0,0]^{T}$.
\end{example}
In the remainder of the section we discuss only histogram vectors and not cell-level vectors, but keeping in mind that there is a many-to-one relation between cell-level vectors and histogram vectors.

\begin{proposition}\label{prop:diag}
Let $\vect{h}$ be a histogram vector of some memory word. $\vect{h}$ is a histogram of a legal NCC codeword if and only if it satisfies
\begin{equation}
diag\left(\vect{h}\right)A\vect{h}=\vect{0},
\end{equation}
where $diag\left(\vect{h}\right)$ stands for a square diagonal matrix with the elements of vector $\vect{h}$ on its main diagonal,
\begin{equation}
A\triangleq \begin{pmatrix} 0 & 1 & 0 & \cdots & 0 \\ 0 & 0 & 1 &\cdots & 0 \\ \vdots & \vdots & \vdots & \ddots & 0\\ 0 & 0 & 0 & \cdots & 1 \\ 0 & 0 &0 &0 & 0  \end{pmatrix},
\end{equation}
and $\vect{0}$ is the all-zero column vector.
\end{proposition}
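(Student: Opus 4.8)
The plan is to carry out a direct, component-wise evaluation of the vector $diag\left(\vect{h}\right)A\vect{h}$ and to show that its vanishing is a literal restatement of Definition~\ref{df:NCC}. The first step is to unwind the action of $A$. Since $A$ is the shift matrix whose only nonzero entries are the $1$'s on the superdiagonal, the product $A\vect{h}$ is simply the histogram shifted by one position: indexing coordinates by the level $i\in\{0,\ldots,q-1\}$, one has $\left(A\vect{h}\right)_i=h_{i+1}$ for $i\in\{0,\ldots,q-2\}$, while the last coordinate (level $q-1$) is $0$ because the bottom row of $A$ is identically zero.

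Next I would left-multiply by $diag\left(\vect{h}\right)$. Left multiplication by a diagonal matrix scales each coordinate by the corresponding diagonal entry, so the coordinate of $diag\left(\vect{h}\right)A\vect{h}$ at level $i$ equals $h_i\cdot\left(A\vect{h}\right)_i=h_i h_{i+1}$ for $i<q-1$, and equals $h_{q-1}\cdot 0=0$ in the final coordinate. Consequently $diag\left(\vect{h}\right)A\vect{h}=\vect{0}$ holds if and only if $h_i h_{i+1}=0$ for every $i\in\{0,\ldots,q-2\}$.

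Finally I would translate this algebraic condition back into the language of occupied levels. Because every $h_i$ is a nonnegative integer, the product $h_i h_{i+1}$ vanishes precisely when at least one of $h_i,h_{i+1}$ is zero, i.e.\ when levels $i$ and $i+1$ are not simultaneously occupied by cells of $\vect{c}$. Demanding this for all adjacent pairs $\left(i,i+1\right)$ is exactly the requirement that the word contains no two consecutive memory levels, which is the definition of a legal $NCC(\cellblocksize,q)$ codeword; this establishes the stated equivalence in both directions.

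I do not expect any serious obstacle here, since the argument is a short computation rather than a deep fact. The only points that require genuine care are the bookkeeping of the $0$-based level indexing against the matrix layout, the boundary coordinate $q-1$ (whose shifted entry is forced to $0$ by the zero bottom row of $A$, so it contributes no constraint and no information), and the elementary but essential observation that a product of nonnegative integers is zero if and only if one of its factors is zero. This last fact is what makes the bilinear expression $diag\left(\vect{h}\right)A\vect{h}$ faithfully encode the combinatorial ``no adjacent occupied levels'' condition.
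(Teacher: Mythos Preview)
Your proposal is correct and follows essentially the same route as the paper: compute $A\vect{h}$ as the shifted histogram, observe that $diag(\vect{h})A\vect{h}$ has entries $h_i h_{i+1}$ (with a trailing $0$), and note that vanishing of all these products is exactly the non-consecutive condition. The paper's proof is simply a terser version of this same component-wise computation.
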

\begin{proof}
$A\vect{h}$ is the shifted histogram $[h_1,h_2,\ldots,h_{q-1},0]^{T}$, and $diag\left(\vect{h}\right)A\vect{h}$ is the scalar multiplication of the vectors $\vect{h}$ and $A\vect{h}$. Therefore, we get
\begin{equation}
diag\left(\vect{h}\right)A\vect{h}=\left[h_0\cdot h_1,h_1\cdot h_2,\ldots,h_{q-2}\cdot h_{q-1},0\right]^T,
\end{equation}
and requiring this vector to be the all-zero vector is equivalent to having no two consecutive occupied levels.
\end{proof}
Let $\vect{h}$ be a histogram vector of some memory word, and let $\vect{e}$ be an {\em error} histogram vector with all non-negative elements, whose $i$-th element specifies how many memory cells moved from level $i$ to level $i-1$ due to asymmetric magnitude-$1$ errors. The effect of each such error is decrementing $h_i$ and incrementing $h_{i-1}$. As a result, we get that the received histogram $\vect{w}$ is
\begin{equation}
\vect{w}=\vect{h}+T\vect{e},
\end{equation}
where
\begin{equation}
T\triangleq \left(
  \begin{array}{rrrrrr}
    0 & 1 & 0 & 0 & \cdots & 0 \\
    0 & -1 & 1 & 0 & \cdots & 0 \\
    0 & 0& -1 & 1 & \cdots & 0 \\
    \vdots & \vdots & \vdots & \ddots & \ddots & \vdots \\
    0 & 0& 0 & \cdots & -1 & 1 \\
    0 & 0& 0 & 0 & \cdots & -1 \\
  \end{array}
\right)_{q \times q}.
\end{equation}
Note that by definition $\vect{e}$ should satisfy $\vect{e}\leq\vect{h}$. We now define the matrix
\begin{equation}
T^{\dag}\triangleq \left(
  \begin{array}{rrrrrr}
    0 & 0 & 0 & 0 & \cdots & 0 \\
    0 & -1 & -1 & -1 & \cdots & -1 \\
    0 & 0& -1 & -1 & \cdots & -1 \\
    \vdots & \vdots & \vdots & \ddots & \ddots & \vdots \\
    0 & 0& 0 & \cdots & -1 & -1 \\
    0 & 0& 0 & 0 & \cdots & -1 \\
  \end{array}
\right)_{q \times q},
\end{equation}
and observe that it represents the inverse action of $T$ because $T^{\dag}T$ gives the identity mapping on length $q$ vectors of the form $(0,u_1,\ldots,u_{q-1})$, which is exactly the form of error histogram vectors $\vect{e}$.
\begin{definition}
Given two histogram vectors $\vect{h_1},\vect{h_2}$ we define their {\bf downward difference vector} $\vect{d}$ and their {\bf upward difference vector} $\vect{\hat{d}}$ as follows
\begin{gather}
\vect{d}\left(\vect{h_1},\vect{h_2}\right)= \frac{T^{\dag}\left(\vect{h_2}-\vect{h_1}\right)+|T^{\dag}\left(\vect{h_2}-\vect{h_1}\right) |}{2},  \nonumber \\
\vect{\hat{d}}\left(\vect{h_1},\vect{h_2}\right)= -\frac{T^{\dag}\left(\vect{h_2}-\vect{h_1}\right)-|T^{\dag}\left(\vect{h_2}-\vect{h_1}\right) |}{2}.
\end{gather}
\label{df:diff_vecs}
\end{definition}
$|\cdot|$ represents element-wise absolute value. Notice that the elements of $\vect{d}$ and $\vect{\hat{d}}$ are non-negative. In addition, $\vect{d}\left(\vect{h_1},\vect{h_2}\right)-\vect{\hat{d}}\left(\vect{h_1},\vect{h_2}\right)=T^{\dag}\left(\vect{h_2}-\vect{h_1}\right)$. We now move to define a distance function for histograms. In all the discussion of the distance, we assume that the histogram vectors have the same length (the alphabet size $q$), and the same sum (the block size $n$).
\begin{definition}
Given two histogram vectors $\vect{h_1},\vect{h_2}$ we define their {\bf downward distance} $\cD^{\downarrow}$ and their {\bf upward distance} $\cD^{\uparrow}$ as follows
\begin{gather}
\cD^{\downarrow}\left(\vect{h_1},\vect{h_2}\right) = \| \vect{d}\left(\vect{h_1},\vect{h_2}\right)\|,  \nonumber \\
\cD^{\uparrow}\left(\vect{h_1},\vect{h_2}\right) = \| \vect{{\hat{d}}}\left(\vect{h_1},\vect{h_2}\right)\|.
\end{gather}
$\|\cdot\|$ stands for the $\ell_1$ vector norm.
\label{df:distance}
\end{definition}
Note the symmetry properties of the distance functions $\vect{d}\left(\vect{h_2},\vect{h_1}\right)=\vect{\hat{d}}\left(\vect{h_1},\vect{h_2}\right)$, and $\cD^{\downarrow}\left(\vect{h_2},\vect{h_1}\right)=\cD^{\uparrow}\left(\vect{h_1},\vect{h_2}\right)$.
\begin{example}\label{ex:distance}
For the following $q=8$, $\cellblocksize=8$ histograms $\vect{h_1}=[0,2,0,3,3,0,0,0]^{T}$ and $\vect{h_2}=[2,0,0,2,4,0,0,0]^{T}$, the downward and upward difference vectors are $\vect{d}(\vect{h_1},\vect{h_2})=\left[0,2,0,0,0,0,0,0\right]^{T}$ and $\hat{\vect{d}}(\vect{h_1},\vect{h_2})=\left[0,0,0,0,1,0,0,0\right]^{T}$, respectively. The downward and upward distances are $\cD^{\downarrow}(\vect{h_1},\vect{h_2})=2$ and $\cD^{\uparrow}(\vect{h_1},\vect{h_2})=1$, respectively. In other words, in order to get $\vect{h_2}$ from $\vect{h_1}$, it is required to perform $\cD^{\downarrow}(\vect{h_1},\vect{h_2})=2$ downward movements of elements in $\vect{h_1}$ (from level $2$ to $1$) and $\cD^{\uparrow}(\vect{h_1},\vect{h_2})=1$ upward movement (from level $3$ to $4$).
\end{example}
A tool to analyze and design codes is the measure of $t$-confusability between histograms.
\begin{definition}
The histogram $\vect{h_1}$ is \textbf{$t$-confusable} as $\vect{h_2}$ if after suffering $t$ errors (or less) a maximum-likelihood decoder returns $\vect{h_2}$.
\end{definition}

Note that the definition of $t$-confusability does not mean that {\em every} pair of cell-level vectors with the corresponding histograms is $t$-confusable. But it does mean that there exists at least one such pair of cell-level vectors that can confuse the decoder in the case of $t$ errors or less.

\begin{proposition} \label{prop:confuse}
$\vect{h_1}$ is $t$-confusable as $\vect{h_2}$ if and only if the following two conditions hold
\begin{enumerate}
\item $\vect{d}\left(\vect{h_1},\vect{h_2}\right)\leq \vect{h_1}$, $\vect{\hat{d}}\left(\vect{h_1},\vect{h_2}\right)\leq \vect{h_2}$,
\item $\cD^{\uparrow}\left(\vect{h_1},\vect{h_2}\right) \leq \cD^{\downarrow}\left(\vect{h_1},\vect{h_2}\right)\leq t$.
\end{enumerate}
\end{proposition}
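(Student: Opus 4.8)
The plan is to prove both directions by reducing $t$-confusability to the existence of a single received histogram $\vect{w}$ that is simultaneously a downward descendant of $\vect{h_1}$ and of $\vect{h_2}$, and then to read off the two conditions from the feasibility requirement $\vect{e}\le\vect{h}$ and the minimal-correction (ML) requirement. The workhorse is the decomposition $T^{\dag}(\vect{h_2}-\vect{h_1})=\vect{d}-\vect{\hat d}$ (already given just after Definition~\ref{df:diff_vecs}), where $\vect d=\vect{d}(\vect{h_1},\vect{h_2})$ and $\vect{\hat d}=\vect{\hat{d}}(\vect{h_1},\vect{h_2})$ are non-negative with disjoint supports, together with the identity $T^{\dag}T=\mathrm{id}$ on error-form vectors $(0,u_1,\ldots,u_{q-1})$ (noted in the text) and its companion $TT^{\dag}=\mathrm{id}$ on sum-zero vectors, which a short check confirms. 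Since $\vect{h_2}-\vect{h_1}$ has zero coordinate sum, the latter gives $\vect{h_2}-\vect{h_1}=T(\vect d-\vect{\hat d})$, which is what lets me pass freely between histograms and their flow vectors.

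For sufficiency I would set $\vect{w}=\vect{h_1}+T\vect d$. Condition~1 ($\vect d\le\vect{h_1}$) makes $\vect d$ an admissible error pattern for $\vect{h_1}$, so $\vect{w}$ is reached from $\vect{h_1}$ by $\|\vect d\|=\cD^{\downarrow}(\vect{h_1},\vect{h_2})\le t$ magnitude-$1$ errors, using the second part of condition~2. The identity above rewrites the same $\vect{w}$ as $\vect{h_2}+T\vect{\hat d}$, and condition~1 ($\vect{\hat d}\le\vect{h_2}$) shows $\vect{h_2}$ is recoverable from $\vect{w}$ by exactly $\|\vect{\hat d}\|=\cD^{\uparrow}(\vect{h_1},\vect{h_2})$ upward corrections. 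Because $\cD^{\uparrow}\le\cD^{\downarrow}$ (first part of condition~2), decoding to $\vect{h_2}$ never costs more than decoding to the true $\vect{h_1}$, so an ML decoder is entitled to output $\vect{h_2}$. It then remains to lift this histogram picture to cell level: starting from any word with histogram $\vect{h_1}$, I lower $d_j$ cells from level $j$ for each $j$ to obtain a concrete received word $\vect{r}$, and raise $\hat d_j$ cells from level $j-1$ to realize a codeword of histogram $\vect{h_2}$.

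For necessity I would start from any confusing received word: write $\vect{w}=\vect{h_1}+T\vect{e_1}$ with $\vect{e_1}\ge\vect{0}$, $\vect{e_1}\le\vect{h_1}$, $\|\vect{e_1}\|\le t$ (the downward errors), and $\vect{w}=\vect{h_2}+T\vect{e_2}$ with $\vect{e_2}\ge\vect{0}$, $\vect{e_2}\le\vect{h_2}$ (the upward corrections the ML decoder performs), where ML optimality forces $\|\vect{e_2}\|\le\|\vect{e_1}\|$. Applying $T^{\dag}$ gives $\vect{e_1}-\vect{e_2}=\vect d-\vect{\hat d}$; since $\vect d,\vect{\hat d}$ have disjoint supports and all four vectors are non-negative, a coordinatewise comparison yields $\vect d\le\vect{e_1}$ and $\vect{\hat d}\le\vect{e_2}$. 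Condition~1 is then immediate from $\vect d\le\vect{e_1}\le\vect{h_1}$ and $\vect{\hat d}\le\vect{e_2}\le\vect{h_2}$, and $\cD^{\downarrow}=\|\vect d\|\le\|\vect{e_1}\|\le t$ gives the second part of condition~2. For the first part, summing $\vect{e_1}-\vect{e_2}=\vect d-\vect{\hat d}$ over all coordinates gives the conservation identity $\|\vect{e_1}\|-\|\vect{e_2}\|=\cD^{\downarrow}-\cD^{\uparrow}$, so $\|\vect{e_2}\|\le\|\vect{e_1}\|$ is exactly $\cD^{\uparrow}\le\cD^{\downarrow}$.

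I expect the main obstacle to be the cell-level bookkeeping in the sufficiency direction rather than the algebra: I must exhibit one concrete received word $\vect{r}$ that is reachable from a histogram-$\vect{h_1}$ word by at most $t$ magnitude-$1$ errors and is at the same time decodable to a histogram-$\vect{h_2}$ word, while respecting that each cell moves by at most one level. The feasibility inequalities $\vect d\le\vect{h_1}$ and $\vect{\hat d}\le\vect{h_2}$ are precisely what guarantee that enough cells sit at each source level to perform the lowerings and the raisings without moving any cell twice; verifying $w_{j-1}\ge\hat d_j$, which reduces to $\hat d_{j-1}\le(h_2)_{j-1}$, is the one computation I would actually carry out to certify the raising step.
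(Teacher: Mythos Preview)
Your proof is correct and follows essentially the same approach as the paper: both directions hinge on the identity $T^{\dag}(\vect{h_2}-\vect{h_1})=\vect{e}-\vect{f}$ and on reading off the positive and negative parts as $\vect{d}$ and $\vect{\hat d}$. Your use of $TT^{\dag}=\mathrm{id}$ on sum-zero vectors in the sufficiency direction is the same linear-algebraic fact as the paper's kernel-of-$T^{\dag}$ argument, and your added cell-level feasibility check ($w_{j-1}\ge\hat d_j$) is a welcome detail the paper leaves implicit.
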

\begin{proof}
If $\vect{h_1}$ can be confused as $\vect{h_2}$ after suffering $t$ errors, then there exist two error histogram vectors $\vect{e}\leq \vect{h_1}$ and $\vect{f}\leq \vect{h_2}$ such that $\|\vect{f}\|\leq \|\vect{e}\|\leq t$  that satisfy $\vect{h_1}+T\vect{e}=\vect{h_2}+T\vect{f}$. By reordering and multiplying by $T^{\dag}$ we get that $T^{\dag}\left(\vect{h_2}-\vect{h_1}\right)=\vect{e}-\vect{f}$. It follows from Definition~\ref{df:diff_vecs} that $\vect{d}\left(\vect{h_1},\vect{h_2}\right)=\vect{e}-\vect{c}$, $\vect{\hat{d}}\left(\vect{h_1},\vect{h_2}\right)=\vect{f}-\vect{c}$ for some non-negative vector $\vect{c}$, and conditions 1 and 2 hold by the properties of $\vect{e},\vect{f}$.\\
For the other direction we assume that the conditions are met and examine the vectors $\vect{w}=\vect{h_1}+T\vect{d}\left(\vect{h_1},\vect{h_2}\right)$ and $\vect{w}'=\vect{h_2}+T\vect{\hat{d}}\left(\vect{h_1},\vect{h_2}\right)$. By subtracting the equations and multiplying by $T^{\dag}$ we get that $T^{\dag}(\vect{w}'-\vect{w})=\vect{0}$. We can conclude that $\vect{w}'=\vect{w}$ because the only solutions to the linear system $T^{\dag}\vect{x}=\vect{0}$ are of the form $[x_0,0,\ldots,0]$, and necessarily $x_0=0$ because $\vect{w}$ and $\vect{w'}$ sum to the same value $n$. Hence the received histogram $\vect{w}=\vect{w'}$ will confuse the decoder when $\vect{h_1}$ is stored and the error is $\vect{d}\left(\vect{h_1},\vect{h_2}\right)$.
\end{proof}
The following proposition specializes the $t$-confusability conditions to the case where the histograms are NCC histograms.
\begin{proposition}\label{prop:hee}
Let $\vect{h}$ and $\vect{g}$ be NCC histograms, and denote $\vect{e}=\vect{d}\left(\vect{h},\vect{g}\right)$, $\vect{f}=\vect{\hat{d}}\left(\vect{h},\vect{g}\right)$. If $\vect{h}$ is $t$-confusable as $\vect{g}$ then for every $i$
\begin{gather}
h_i\cdot\left[\left(e_i-e_{i-1}\right)-\left(f_i-f_{i-1}\right)\right]+
h_{i-1}\cdot\left[\left(e_{i+1}-e_{i}\right)-\left(f_{i+1}-f_{i}\right)\right]+\nonumber \\
\left[\left(e_i-e_{i-1}\right)-\left(f_i-f_{i-1}\right)\right]\cdot
\left[\left(e_{i+1}-e_{i}\right)-\left(f_{i+1}-f_{i}\right)\right]=0,\label{eq:element}
\end{gather}
where $h_i,e_i,f_i$ are the $i$-th elements of the histogram vectors $\vect{h},\vect{e},\vect{f}$, respectively.
\end{proposition}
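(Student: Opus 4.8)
The plan is to show that the left-hand side of \eqref{eq:element} collapses, after one short linear substitution, into a difference of two adjacent-level products that both vanish by the NCC constraint. First I would abbreviate $u_i=e_i-f_i$, so that the first bracket in \eqref{eq:element} is $u_i-u_{i-1}$ and the second is $u_{i+1}-u_i$; the left-hand side then reads
\begin{equation}
h_i\,(u_i-u_{i-1}) + h_{i-1}\,(u_{i+1}-u_i) + (u_i-u_{i-1})(u_{i+1}-u_i). \nonumber
\end{equation}

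The key step is a linear relation identifying these telescoped differences $u_i-u_{i-1}$ and $u_{i+1}-u_i$ with coordinate differences of $\vect{g}$ and $\vect{h}$. By Definition~\ref{df:diff_vecs} we have $\vect{e}-\vect{f}=T^{\dag}(\vect{g}-\vect{h})$ for the difference vectors $\vect{e}=\vect{d}(\vect{h},\vect{g})$, $\vect{f}=\vect{\hat{d}}(\vect{h},\vect{g})$; applying $T$ and using that $\vect{g}-\vect{h}$ sums to zero (both histograms have weight $\cellblocksize$) gives $T(\vect{e}-\vect{f})=\vect{g}-\vect{h}$. Equivalently, in the confusable case Proposition~\ref{prop:confuse} produces a common received histogram $\vect{h}+T\vect{e}=\vect{g}+T\vect{f}$, which is the same identity. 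Reading off coordinate $i$ yields $g_i-h_i=u_{i+1}-u_i$, and hence $u_i-u_{i-1}=g_{i-1}-h_{i-1}$.

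I would then substitute $u_i-u_{i-1}=g_{i-1}-h_{i-1}$ and $u_{i+1}-u_i=g_i-h_i$ into the displayed expression above and expand. The mixed products $h_i g_{i-1}$ and $h_{i-1}g_i$ each occur with both signs and cancel, the surplus $h_{i-1}h_i$ terms cancel, and what remains is exactly $g_{i-1}g_i-h_{i-1}h_i$. Since $\vect{h}$ and $\vect{g}$ are NCC histograms, Proposition~\ref{prop:diag} forces $h_{i-1}h_i=0$ and $g_{i-1}g_i=0$, so the expression is zero for every $i$, establishing \eqref{eq:element}; the boundary indices are handled by the conventions $e_0=f_0=0$ and $h_{-1}=g_{-1}=0$.

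I expect the genuine content to lie entirely in the second step: recognizing that the intricate quadratic in \eqref{eq:element} is engineered so that everything except $g_{i-1}g_i-h_{i-1}h_i$ cancels, and correctly pinning down the linear relation $g_i-h_i=u_{i+1}-u_i$. Once that is in place the calculation is mechanical and the appeal to the NCC constraint is immediate; in fact this argument shows the conclusion holds for any pair of equal-weight NCC histograms, with $t$-confusability entering only to guarantee that $\vect{e}$ and $\vect{f}$ are realizable error patterns.
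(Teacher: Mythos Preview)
Your proof is correct and is essentially the paper's argument run in reverse: the paper writes $\vect{g}=\vect{h}+T(\vect{e}-\vect{f})$ from Proposition~\ref{prop:confuse}, substitutes into the NCC condition $diag(\vect{g})A\vect{g}=\vect{0}$, and expands coordinate $i{-}1$ (implicitly dropping $h_{i-1}h_i$ via the NCC constraint on $\vect{h}$), whereas you start from the left-hand side of~\eqref{eq:element} and collapse it to $g_{i-1}g_i-h_{i-1}h_i$. Your presentation is more explicit about using the NCC property of both $\vect{h}$ and $\vect{g}$, and your closing remark that $t$-confusability is not actually needed for the identity is a correct observation.
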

\begin{proof}
Given that $\vect{h}$ and $\vect{g}$ satisfy the two conditions of Proposition~\ref{prop:confuse}, it is clear that $\vect{h}+T\vect{e}=\vect{g}+T\vect{f}$. In addition, given that $\vect{g}$ is also a histogram of an NCC codeword, it must also fulfill Proposition~\ref{prop:diag}, giving
\begin{equation} \label{eq:tmp_hee}
diag\left(\vect{h} + T\left(\vect{e} - \vect{f} \right)\right)A\left(\vect{h} + T\left(\vect{e} - \vect{f} \right)\right)=\vect{0}.
\end{equation}
Solving~\eqref{eq:tmp_hee} element-wise gives the expression in~\eqref{eq:element}.
\end{proof}
The following proposition shows how the algebraic framework can be applied to prove correctability results.
\begin{proposition} \label{cor:error}
Let $h_i$ be the $i$-th element of an NCC histogram $\vect{h}$. If $ h_i>2t$ for each occupied level $i$, then $\vect{h}$ is not $t$-confusable as any NCC histogram.
\end{proposition}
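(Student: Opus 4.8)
The plan is to argue by contradiction: suppose $\vect{h}$ is $t$-confusable as some NCC histogram $\vect{g}$, and show this forces $\vect{g}=\vect{h}$, so that no genuine confusion can occur. By Proposition~\ref{prop:confuse} there are error histograms $\vect{e}=\vect{d}(\vect{h},\vect{g})\le\vect{h}$ and $\vect{f}=\vect{\hat{d}}(\vect{h},\vect{g})\le\vect{g}$ with $\|\vect{f}\|\le\|\vect{e}\|\le t$ producing a common received histogram $\vect{w}=\vect{h}+T\vect{e}=\vect{g}+T\vect{f}$. Since $\vect{e},\vect{f}$ are non-negative, every entry satisfies $e_i\le\|\vect{e}\|\le t$ and $f_i\le\|\vect{f}\|\le t$; this per-coordinate bound of $t$ is the lever that the hypothesis $h_i>2t$ will act against.

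Next I would exploit the NCC structure of both histograms (Proposition~\ref{prop:diag}, i.e.\ no two consecutive occupied levels) to kill the neighbouring error entries. Fix an occupied level $i$ of $\vect{h}$, so $h_i>2t$ and, by NCC, $h_{i-1}=h_{i+1}=0$; then $\vect{e}\le\vect{h}$ gives $e_{i-1}=e_{i+1}=0$. Reading off the $i$-th coordinate of $\vect{w}=\vect{h}+T\vect{e}$ yields $w_i=h_i-e_i>2t-t=t$. Transferring this to the $\vect{g}$-side, $w_i=g_i-f_i+f_{i+1}$ together with $w_i>t\ge f_{i+1}$ forces $g_i>0$, i.e.\ level $i$ is occupied in $\vect{g}$ as well. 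Invoking NCC for $\vect{g}$ now gives $g_{i-1}=g_{i+1}=0$, whence $\vect{f}\le\vect{g}$ yields $f_{i-1}=f_{i+1}=0$.

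With the neighbouring entries cleared on both sides, I would pin down the value at level $i$ exactly. Comparing the $(i{-}1)$-st coordinate from the two descriptions gives $w_{i-1}=e_i$ (from $\vect{h}$) and $w_{i-1}=f_i$ (from $\vect{g}$), so $e_i=f_i$; substituting into $w_i=h_i-e_i=g_i-f_i$ yields $g_i=h_i$. Thus $g_i=h_i$ on every occupied level of $\vect{h}$. Finally, since both histograms sum to $\cellblocksize$ and $\vect{g}$ is non-negative, the mass already accounted for on the occupied levels of $\vect{h}$ exhausts the total, forcing $g_j=0$ on all remaining levels; hence $\vect{g}=\vect{h}$.

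The step I expect to be the crux is the transition from ``level $i$ carries received mass exceeding $t$'' to ``$g_i>0$ and its neighbours vanish in $\vect{g}$'': this is exactly where the $2t$ gap is consumed, a first $t$ to absorb the downward errors leaving level $i$ and a second $t$ to guarantee survival against the upward corrections modelled by $\vect{f}$. The remaining care is bookkeeping at the boundary indices $i=0$ and $i=q-1$, where the rows of $T$ degenerate, and confirming that total-mass conservation leaves no room for spurious occupied levels of $\vect{g}$ outside the support of $\vect{h}$. Alternatively, the first half of the argument can be replaced by specializing the element-wise identity~\eqref{eq:element} of Proposition~\ref{prop:hee} at an occupied $i$: the condition $h_{i-1}=0$ makes the middle term vanish, the identity factors, and the surviving cross term is bounded in absolute value by $2t<h_i$, which already forces the difference at level $i$ to be zero.
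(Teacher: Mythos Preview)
Your argument is correct, and it takes a genuinely different route from the paper's. The paper invokes Proposition~\ref{prop:hee} at a single level $i$ with $e_i>0$, then splits into the two cases $f_{i-1}=f_{i+1}=0$ versus $f_i=0$ dictated by the NCC property of $\vect{g}$, and in each case reads off from~\eqref{eq:element} an equation of the form $h_i=e_i-f_i$ or $h_i=e_i+f_{i+1}$, both of which contradict $h_i>2t$. You instead bypass~\eqref{eq:element} entirely: working directly with the received histogram $\vect{w}=\vect{h}+T\vect{e}=\vect{g}+T\vect{f}$, you show that at every occupied level $i$ of $\vect{h}$ the residual mass $w_i>t$ forces $g_i>0$, then clear the neighbours on the $\vect{g}$ side and compare the $(i{-}1)$-st coordinates to get $e_i=f_i$ and hence $g_i=h_i$; mass conservation finishes the global identification $\vect{g}=\vect{h}$.

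What each approach buys: the paper's proof is shorter once Proposition~\ref{prop:hee} is available, and it localises the contradiction to a single level. Your proof is more elementary (it uses only Propositions~\ref{prop:diag} and~\ref{prop:confuse}) and yields the stronger structural conclusion that $\vect{g}$ must equal $\vect{h}$, not merely that some coordinate is inconsistent; it also avoids the case split. Your closing remark about specialising~\eqref{eq:element} is essentially the paper's argument in compressed form, though note that what it delivers directly is $e_i-f_i+f_{i-1}=0$, so one more step (or your neighbour-clearing argument) is still needed to reach $e_i=f_i$. The boundary cases $i=0$ and $i=q-1$ are indeed routine: at $i=0$ the $(i{-}1)$-comparison is unnecessary since $e_1=f_1=0$ already gives $g_0=w_0=h_0$, and at $i=q-1$ the missing $e_q,f_q$ terms simply drop.
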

\begin{proof}
Let us assume that $\vect{h}$ is $t$-confusable as some other NCC histogram $\vect{g}$. Then by Proposition~\ref{prop:hee} the equation~\eqref{eq:element} is satisfied for every $i$. Recall that $\vect{e}=\vect{d}\left(\vect{h},\vect{g}\right)$ and $\vect{f}=\vect{\hat{d}}\left(\vect{h},\vect{g}\right)$; we now show that~\eqref{eq:element} cannot be satisfied under the conditions of this proposition. There exists a level $i$ with $e_i> 0$, and by condition 1 in Proposition~\ref{prop:confuse} we also get that $h_i > 0$. By the NCC property, $h_{i-1}=0$ and $h_{i+1}=0$, which imply $e_{i-1}=e_{i+1}=0$. In addition, by the same NCC property on $\vect{g}$, either $f_{i-1}=f_{i+1}=0$ or $f_{i}=0$ (or both). Therefore, assigning these values in~\eqref{eq:element} gives two sets of equations
\begin{equation} \label{eq:ele_prf1}
h_i\cdot\left(e_i-f_i\right)=\left(e_i-f_i\right)\cdot \left(e_i-f_i\right)
\end{equation}
when $f_{i-1}=f_{i+1}=0$, or
\begin{gather}
h_i\cdot\left(e_i+f_{i-1}\right)=\left(e_i+f_{i-1}\right)\cdot \left(e_i+f_{i+1}\right) \Rightarrow \nonumber \\
h_i = e_i+f_{i+1} \label{eq:ele_prf2}
\end{gather}
when $f_{i}=0$. Note that the two equations become the same if both $f_{i-1}=f_{i+1}=0$ and $f_{i}=0$.
Given that $\vect{e}\neq \vect{f}$, there exists at least one $i$ for which ~\eqref{eq:ele_prf1} implies $h_i=e_i-f_i$, a contradiction because $h_i>2t$. In addition, we have $\|\vect{f}\|\leq \|\vect{e}\|\leq t$ which implies that $f_{i+1}\leq t$, $e_i\leq t$, therefore,~\eqref{eq:ele_prf2} cannot be satisfied as well. Hence we proved a contradiction showing that $\vect{h}$ has no $t$-confusable histograms.
\end{proof}
\emph{Remark:} Proposition~\ref{cor:error} can be further refined for the special case in which $h_i=2t$. In order to fulfill~\eqref{eq:ele_prf2}, we must have $e_i=f_{i+1}=t$. That means all $t$ errors occurred in level $i$ and a $t$-confusable histogram can only be created by moving the occupation of level $i$ to level $i+1$. This scenario can hold if and only if $h_{i-2}=h_{i+2}=0$.
\begin{example}
Let us calculate the histogram patterns that are $1$-confusable as some other histograms, by using Proposition~\ref{prop:hee}. Let us assume an error occurs in level $j$, hence $e_j=1$ and all the other values of $\vect{e}$ are $0$. We wish to find all possible values of $\vect{h}$ that are $1$-confusable as some other NCC histogram. When $e_j=1$ we know that $f_j=0$ (the error in the confusing histogram must be elsewhere), and using~\eqref{eq:ele_prf2} we get three possible solutions
\begin{enumerate}
\item $h_j=2$, $f_{j+1}=1$ (by \eqref{eq:ele_prf2} with $i=j$). Then from the fact that $f_i=0$ for $i \neq j+1$ (a single error), we also get $h_{j-2}=0$ (by \eqref{eq:element} with $i=j-1$), and $h_{j+2}=0$ (by \eqref{eq:element} with $i=j+2$).
    \item $h_j=1$, $f_{j+1}=0$ (by~\eqref{eq:ele_prf2} with $i=j$), and $h_{j-2}=1$ (by \eqref{eq:element} with $i=j-1$ and $f_{j-1}=1$).
   \item $h_j=1$, $f_{j+1}=0$ (by~\eqref{eq:ele_prf2} with $i=j$), and $h_{j-2}=0$ (by \eqref{eq:element} with $i=j-1$ and $f_{j-1}=0$).
\end{enumerate}
The solutions are graphically depicted in Fig~\ref{fig:histsOneError}. Pattern A in Fig~\ref{fig:histsOneError} corresponds to the first solution, pattern B to the second one, and pattern C to the third solution.
\begin{figure}[htbp]
   \centering
   \includegraphics[height=2.1in,keepaspectratio=true,width=0.5\textwidth]{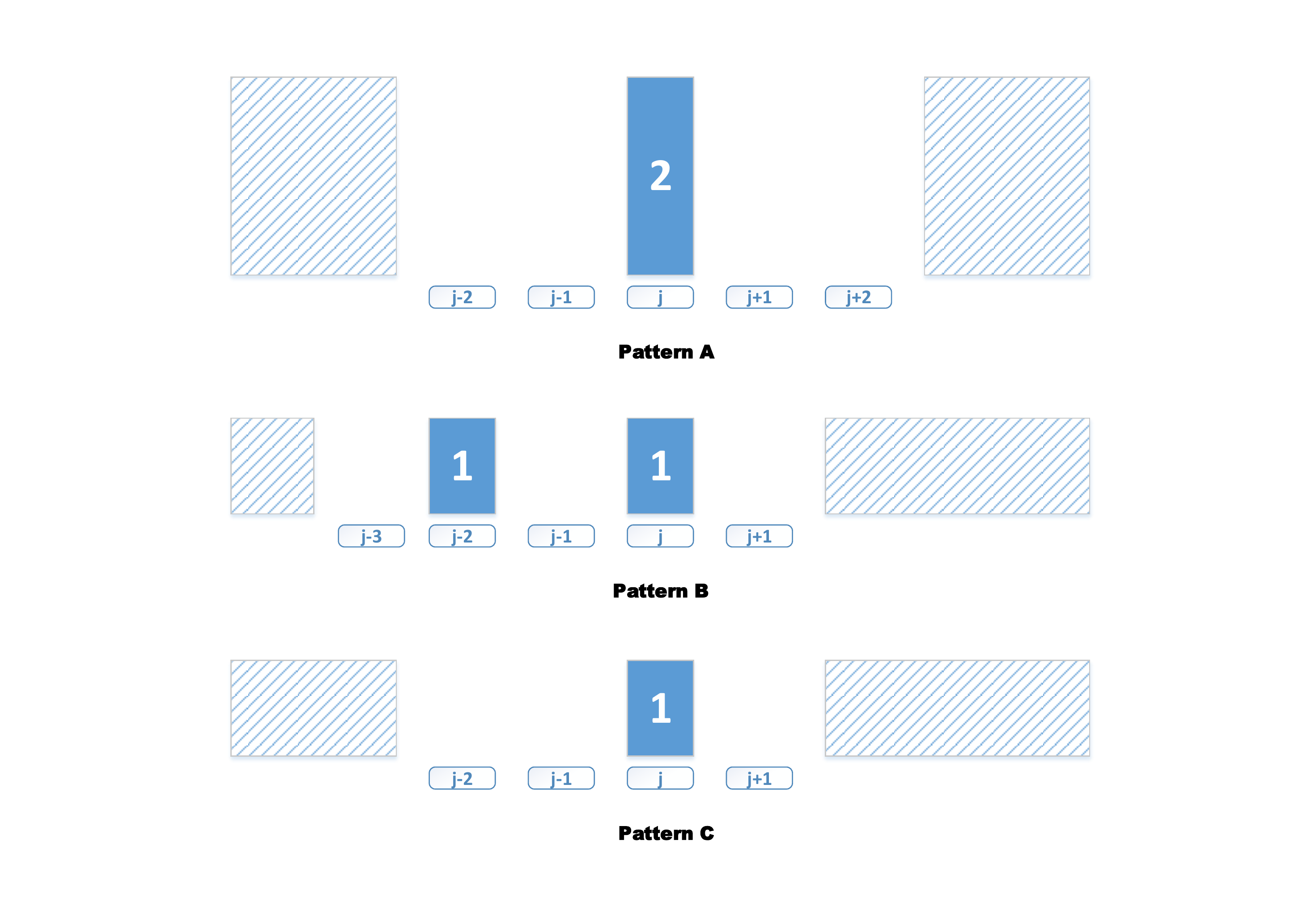}
   \caption{NCC histogram patterns that are 1-confusable as another NCC histogram. These histograms have at least one uncorrectable single error. Dashed bars represent don't-care assignment.}
   \label{fig:histsOneError}
\end{figure}
\end{example}
\section{Decoding \& Encoding Algorithms for the NCC}\label{sec:dec2}
\subsection{A Decoding Algorithm}
In this subsection we describe a decoding algorithm for the NCC. The objective is a maximum-likelihood decoding algorithm that finds the nearest codeword to the received memory word, in the sense that a minimal number of $+1$ cell transitions map between the memory word and the output codeword. The strengths of the proposed decoding algorithm are that it guarantees to return the nearest codeword, is invariant to the code block length $\cellblocksize$, and has very low complexity of $\cO\left(q\right)$. To present the algorithm, we first give several formal definitions.
\begin{definition}
For a histogram vector $\vect{h}$, we define a \textbf{section} $\hsection\left(\vect{h}\right)$ as a minimal-length sub-vector of $\vect{h}$ that is separated from other occupied levels by at least two empty levels. $N_{\hsection}\left(\vect{h}\right)$ is the number of sections in $\vect{h}$.
\label{df:section}
\end{definition}
In the sequel, sections will be used by the algorithm as parts of the histogram that can be decoded independently.
\begin{definition}
A \textbf{burst} $\burst\left(\vect{h}\right)$ is defined as a maximal length sub-vector in a histogram $\vect{h}$ with all its entries greater than zero.  $N_{\burst}\left(\vect{h}\right)$ is the number of bursts in $\vect{h}$. 
\label{df:burst}
\end{definition}
$\hsection_l \left(\vect{h}\right)$ represents the $l$-th section of a certain histogram vector $\vect{h}$. We drop the $\vect{h}$ notation when it is clear to which histogram vector the section refers to. $\burst_j \left(\hsection_l\right)$ represents the $j$-th burst in section $\hsection_l$, and for simplicity of notation, we denote $\burst_j \left(\hsection_l\right)\equiv \burst_j^l$.
\begin{example}\label{ex:section_burt}
For the following $q=12$, $\cellblocksize=15$ histogram vector $\vect{h}=\left[2,0,0,1,3,1,0,0,1,2,0,5\right]^{T}$, there are three sections: $\hsection_1\left(\vect{h}\right)=\left[2\right]$, $\hsection_2\left(\vect{h}\right)=\left[1,3,1\right]^{T}$ and $\hsection_3\left(\vect{h}\right)=\left[1,2,0,5\right]^{T}$. The sections $\hsection_1$ and $\hsection_2$ contain only one burst each, and $\hsection_3$ has two bursts: $\burst_1^3=\burst_1\left(\hsection_3\right)=\left[1,2\right]^{T}$ and $\burst_2^3=\burst_2\left(\hsection_3\right)=\left[5\right]$.
\end{example}
From Definition~\ref{df:burst}, it is obvious that any burst of two levels or more must violate the NCC. Therefore, the purpose of the decoding algorithm is to resolve all the NCC violations expressed as a set of bursts. As a consequence, a decoded codeword must not contain a burst of two levels or more. An NCC violation in a burst can be resolved by up to two options: either all the burst occupations of odd levels must be moved up by one to the even ones, or all the burst occupations of even levels must be moved up to the odd ones (a graphic illustration is presented in Fig.~\ref{fig:hists_ex1}). As will be explained further on, for some scenarios, only one of the movement options is possible.
\begin{figure}[htbp]
   \centering
   \includegraphics[height=2.1in,keepaspectratio=true,width=0.5\textwidth]{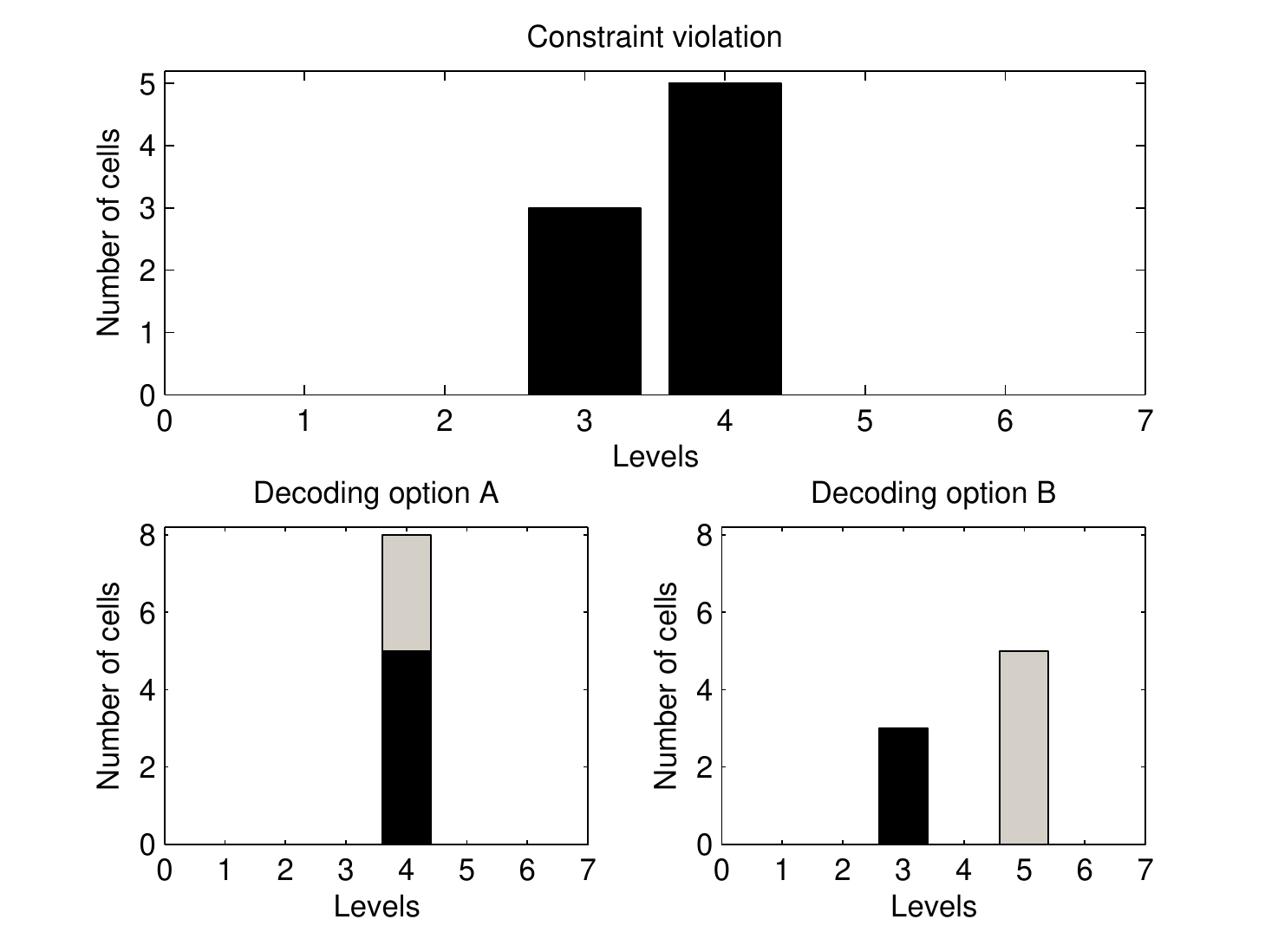}
   \caption{Constraint violation and two options for decoding: moving up the even level or moving up the odd level. Corrected cells are in gray.}
   \label{fig:hists_ex1}
\end{figure}
\begin{definition} \label{df:sigma}
We define the {\bf movement} that resolves the NCC violation induced by burst $\burst$ as either $\sigma\left(\burst\right)$, when keeping the highest level of $\burst$ intact, or $\bar{\sigma}\left(\burst\right)$, when the occupation of the highest level of $\burst$ is moved up. 
\end{definition}
\begin{definition} \label{df:Cost2}
For a movement of levels $\pi(\burst)$, the \textbf{cost} of this movement $\cost\left(\pi(\burst)\right)$ is defined as the number of cells whose level is moved in this action. When $\pi(\burst)$ includes the movement of cells occupying the level $q-1$ (which is impossible), we define $\cost\left(\pi(\burst)\right)=\infty$.
\end{definition}
During the decoding process, the different combinations for resolving bursts are determined in order to achieve minimal cost, hence, finding the nearest NCC codeword.
\begin{example}
For $q=10$, $\cellblocksize=12$, let us assume we read a word with the histogram $\vect{h}=\left[0,4,2,0,0,1,0,0,3,2\right]^{T}$. From the existence of two-level bursts, it is clear that this memory word is corrupt by some errors. All three sections contain only one burst each; the burst of the first section is $\burst_1^1=\left[4,2\right]^{T}$. So in order to resolve this violation, it is possible to perform a $\bar{\sigma}\left(\burst_1^1\right)$ movement (which means we move the $2$ cells at level $2$ upward) and obtain $\left[4,0,2\right]^{T}$. Alternatively, we can perform a $\sigma\left(\burst_1^1\right)$ movement (moving the $4$ cells at level $1$ upward), and obtain $\left[0,6,0\right]^{T}$. The costs of these two movement options are $2$ and $4$, respectively, so in order to minimize the cost, it is obvious that $\bar{\sigma}\left(\burst_1^1\right)$ is preferred. The burst of the second section is $\burst_1^2=\left[1\right]$. In case this cell endured an error, it cannot be decoded successfully. The burst of the last section $\burst_1^3=\left[3,2\right]^{T}$ has only one correction option $\sigma\left(\burst_1^3\right)$, because level $q-1$ cannot move upward. The decoded histogram is therefore $\vect{h'}=\left[0,4,0,2,0,1,0,0,0,5\right]^{T}$.
\end{example}
The decoding processes can work on each section independently, but multiple bursts within a section need to be decoded jointly. The decoder is thus specified below through the main iteration on sections in Algorithm~\ref{alg:decode_hist}, and with the decoding of an individual section as a dynamic-programming (Viterbi-like) subroutine in Algorithm~\ref{alg:decode_section}.
\begin{algorithm}\label{alg:decode_hist}
\SetInd{.03in}{.03in}
\SetKwInOut{Input}{input}
\SetKwInOut{Output}{output}
\caption{DecodeHistogram}
\Input{$\vect{h}$}
\Output{$\vect{h'}$}

for $l=1$ to $N_{\hsection}\left(\vect{h}\right)$

\qquad $actionVec=$DecodeSection($\hsection_l\left(\vect{h}\right)$)

\qquad correct $\vect{h}\Rightarrow \vect{h'}$ by applying $actionVec$ on all $\burst_j \in \hsection_l\left(\vect{h}\right)$

end

\end{algorithm}
 The algorithm starts with the burst, and for every burst $j$ and a movement $\pi \in \left\{\sigma,\bar{\sigma}\right\}$, it builds a concatenated cost table $\tree_j\left(\pi\right)$, which holds the minimal cost of movements for resolving all bursts up to burst $j$. The algorithm also stores the movement path $\Pi_j$ which holds the sequence of movements with the optimal cost.
 During the burst iteration we distinguish between two cases: in a burst $j$ with odd length, not moving the high level of the burst necessitates not moving the high level of burst $j-1$, and in a burst $j$ with even length, {\em moving} the high level of burst $j$ necessitates not moving the high level of burst $j-1$. The decoding process ends with the highest burst that determines which movement is possible for the previous burst (which was already solved). It is clear that the decoding algorithm has $\cO\left(q\right)$ complexity and uses two tables of size $2\sum_{l=1}^{N_{\hsection}}N_{\burst}\left(\hsection_l\right)$, which is smaller than $2q$.

\begin{algorithm}[ht]\label{alg:decode_section}
\SetInd{.03in}{.03in}
\SetKwInOut{Input}{input}
\SetKwInOut{Output}{output}
\caption{DecodeSection}
\Input{$\hsection$}
\Output{$actionVec$}
//\underline{initialization:}

$\tree_1\left(\sigma\right)=\cost\left(\sigma\left(\burst_1\right)\right)$, $\tree_1\left(\bar{\sigma}\right)=\cost\left(\bar{\sigma}\left(\burst_1\right)\right)$

$\Pi_1\left(\sigma\right)=\Pi_1\left(\bar{\sigma}\right)=\emptyset$

//\underline{tables fill}

for $j=2$ to $N_{\burst}\left(\hsection\right)$

\qquad $\alpha = argmin_{\left\{\sigma,\bar{\sigma}\right\}}\left(\tree_{j-1}\left(\sigma\right),\tree_{j-1}\left(\bar{\sigma}\right)\right)$

//\underline{for odd-length bursts}:

\qquad $\tree_j\left(\sigma\right)=\cost\left(\sigma\left(\burst_j\right)\right)+ \tree_{j-1}\left(\sigma\right)$

\qquad $\Pi_j\left(\sigma\right) = \left[\Pi_{j-1}\left(\sigma\right),\sigma\right]$

\qquad $\tree_j\left(\bar{\sigma}\right)=\cost\left(\bar{\sigma}\left(\burst_j\right)\right)+ \min\left(\tree_{j-1}\left(\sigma\right),\tree_{j-1}\left(\bar{\sigma}\right)\right)$

\qquad $\Pi_j\left(\bar{\sigma}\right) = \left[\Pi_{j-1}\left(\alpha\right),\alpha\right]$

//\underline{for even-length bursts}:

\qquad $\tree_j\left(\sigma\right)=\cost\left(\sigma\left(\burst_j\right)\right)+ \min\left(\tree_{j-1}\left(\sigma\right),\tree_{j-1}\left(\bar{\sigma}\right)\right)$

\qquad $\Pi_j\left(\sigma\right) = \left[\Pi_{j-1}\left(\alpha\right),\alpha\right]$

\qquad $\tree_j\left(\bar{\sigma}\right)=\cost\left(\bar{\sigma}\left(\burst_j\right)\right)+ \tree_{j-1}\left(\sigma\right)$

\qquad $\Pi_j\left(\bar{\sigma}\right) = \left[\Pi_{j-1}\left(\sigma\right),\sigma\right]$









end

//\underline{making final decision}

$\alpha = argmin_{\left\{\sigma,\bar{\sigma}\right\}}\left(\tree_{N_{\burst}\left(\hsection\right)}\left(\sigma\right),\tree_{N_{\burst}\left(\hsection\right)}\left(\bar{\sigma}\right)\right)$


$actionVec = \left[\Pi_{N_{\burst}}\left(\alpha\right),\alpha\right]$
\end{algorithm}

\subsection{An Encoding Algorithm}\label{sec:encoder}
In this section we present an encoder for the NCC. The purpose of the encoder is to map a decimal value $x$ to an $n$-cell, $q$-level NCC codeword. The main idea behind the encoder is translating the analytic expression counting NCC codewords~\eqref{eq:NCCrate} to a constructive mapping. We start with the following definitions.
\begin{definition}
Let $LUT$ be a look-up table whose $q/2$ elements are calculated by
\begin{equation}
LUT\left(k\right) = \sum_{i=1}^{k} i!\cdot S\left(n,i\right) \cdot \binom{q-i+1}{i},
\end{equation}
where $1\leq k \leq \frac{q}{2}$.
\label{df:LUT0}
\end{definition}
\begin{definition}
Let $\lambda\left(m,k,j\right)$ be an enumeration function mapping an integer $j$ out of $1,\ldots,\binom{m}{k}$ to a subset of $k$ distinct indices from $\{0,\ldots,m-1\}$.
\label{df:LT1}
\end{definition}
\begin{example}
For $m=4$ and $k=2$, the value of $j$ is between $1$ to $\binom{4}{2}=6$, and corresponds to one of the following sets: $\left\{1,2\right\},\left\{1,3\right\},\left\{1,4\right\},\left\{2,3\right\},\left\{2,4\right\},\left\{3,4\right\}$. Therefore, e.g. $\lambda\left(4,2,1\right)=\left\{1,2\right\}$ and $\lambda\left(4,2,6\right)=\left\{3,4\right\}$.
\end{example}
\begin{definition}
Let $\pi\left(k,i\right)$ be an enumeration function mapping an integer $i$ in the interval $1,\ldots,k!$ to a permutation of the indices $1,\ldots,k$.
\label{df:LT2}
\end{definition}
Algorithm~\ref{alg:HistogramPartition} presents a recursive $\cO\left(n\right)$ algorithm that enumerates partitions counted by the Stirling numbers of the second kind, and maps integers to partitions.
\begin{algorithm}[ht]\label{alg:HistogramPartition}
\SetInd{.03in}{.03in}
\SetKwInOut{Input}{input}
\SetKwInOut{Output}{output}
\caption{StirPar}
\Input{$n,k,x$}
\Output{$\cP$}
//\underline{recursion termination conditions}

if $n=k$

\qquad $\cP = \left[ \left\{c_1\right\},\left\{c_2\right\},\ldots,\left\{c_n\right\} \right]$

elseif $k=1$

\qquad $\cP = \left[ \left\{c_1,c_2,\ldots,c_n\right\} \right]$

end

//\underline{following Stirling's recurrence}

$\tilde{x} = x - k \cdot S\left(n-1,k\right)$

if $\tilde{x} > 0$

\qquad $\tilde{\cP} = StirPar\left(n-1,k-1,\tilde{x}\right) $

\qquad $\cP = \left[\left\{c_n\right\},\tilde{\cP}\right]$

else

\qquad $ \tilde{k} = \left\lceil x / S\left(n-1,k\right) \right\rceil$

\qquad $ \tilde{x} = x - \left(\tilde{k}-1\right) \cdot S\left(n-1,k\right) $

\qquad $ \tilde{\cP} = StirPar\left(n-1,k,\tilde{x}\right) $

\qquad append $c_n$ to $\tilde{k}-th$ element of $\tilde{\cP}$ and return as $\cP$

end

\end{algorithm}
\begin{example}
There are $S\left(5,3\right) = 25$ partitions of an $n=5$ set to $k=2$ non-empty sets. Using Algorithm~\ref{alg:HistogramPartition} gives
$ StirPar \left(5,3,4\right)= \left\{c_4,c_5\right\},\left\{c_3,c_1\right\},\left\{c_2\right\}$, and $StirPar \left(5,3,23\right)=\left\{c_5\right\},\left\{c_1\right\},\left\{c_2,c_3,c_4\right\}$.
\end{example}
We now present the encoding process of $x$ to a length-$n$ NCC codeword by the following steps. Encoding starts with the following calculations:
\begin{enumerate}
\item Let $k$ be the minimal index of LUT for which $x<LUT\left(k\right)$. $k$ represents the number of occupied levels in the histogram of the encoded codeword. If $k>1$ set $y=x-LUT\left(k-1\right)$, otherwise set $y=x$.
\item Let $ i_{\pi}=\left\lfloor y / \left( S\left(n,k\right)\binom{q-k+1}{k} \right)\right\rfloor+1$
\item Let $j_{\lambda} =\left\lfloor \left(y - \left(i_{\pi}-1\right)S\left(n,k\right)\binom{q-k+1}{k} \right)/S\left(n,k\right)\right\rfloor +1$
\item Let $\hat{y} = y - \left(i_{\pi}-1\right)S\left(n,k\right)\binom{q-k+1}{k} - \left(j_{\lambda}-1\right)S\left(n,k\right)+1$
\end{enumerate}
After calculating $k, i_{\pi}, j_{\lambda} ,\hat{y}$ we can now construct the NCC codeword by the following steps.
\begin{enumerate}
\item Let $\vect{s}$ be the length-$k$ vector whose elements are the $k$ elements of $\lambda\left(q-k+1,k,j_{\lambda}\right)$ in ascending order.
\item To each index $i=1,\ldots,k$ in $\vect{s}$, add the value $i-1$.
\item Let $\cP$ be a partition obtained by $StirPar\left(n,k,\hat{y}\right)$.
\item Permute $\cP$ by the permutation obtained by $\pi\left(k,i_{\pi}\right)$.
\item Create the NCC codeword by assigning the cells that appear in the $i$-th set of $\cP$ to the level equaling the $i$-th element of $\vect{s}$.
\end{enumerate}
\begin{example}
Let us assume we wish to encode the value $x=1660$ to an $n=5$, $q=8$ NCC codeword. We start with the preliminary calculations that give:
\begin{enumerate}
\item $LUT=\left[8,638,3638,4838\right]$ therefore $k=3$, we set $y=1660-638=1022$
\item $ i_{\pi}=3$
\item $j_{\lambda} =1$
\item $\hat{y} = 23$
\end{enumerate}
We now construct the codeword by:
\begin{enumerate}
\item $\vect{s}=\lambda\left(6,3,1\right)=\left[0,1,2\right]$.
\item After applying the second step we get $\vect{s}=\left[0,2,4\right]$.
\item $\cP=StirPar\left(5,3,23\right)=\left\{c_5\right\}\left\{c_1\right\}\left\{c_2,c_3,c_4\right\}$.
\item $\pi\left(3,3\right)=\left(3,2,1\right)$, so after permuting $\cP$ we get $\cP=\left\{c_2,c_3,c_4\right\}\left\{c_1\right\}\left\{c_5\right\}$.
\item By pairing $\cP$ with $\vect{s}$ we get the following codeword $\vect{c}=\left(2,0,0,0,4\right)$.
\end{enumerate}
So the decimal value $x=1660$ is encoded to the $n=5$, $q=8$ NCC codeword $\vect{c}=\left(2,0,0,0,4\right)$.
\end{example}
The reverse mapping is a straightforward application of the same steps above in reverse order.
\section{Performance Analysis}\label{sec:perfrom}
In this section we wish to evaluate the error-correcting performance of the NCC code. An exact calculation of the correction probability is challenging in general, so we derive lower bounds. We also show experimentally that the NCC code gives superior performance to known coding alternatives for the same error model.
\subsection{Experimental results}\label{subsec:perf_exp}
Table~\ref{tb:empiric} presents the correction performance of the NCC code for $q=8$. The two left columns include the code length $\cellblocksize$ and its corresponding information rate. The rest of the table presents the probability to fully-correct a specified number of uniformly distributed asymmetric magnitude-$1$ errors.
\begin{table} [ht]
\caption{Full-correction probability for the $n$-cell 8-level NCC}
\centering
\begin{tabular}{|c|c||c|c|c|c|c|c|}
  \hline
   \multicolumn{2}{|c||}{$q=8$} & \multicolumn{6}{|c|}{Total number of errors $\left\|\vect{e}\right\|$} \\
  \hline
  n & $\cR_{NCC} $ & 1 & 2 & 3 & 4 & 5 & 6 \\
  \hline
  5 & 0.816 & 0.801 & 0.478 & 0.170 & 0.043 & 0.007 & 0   \\
  9 & 0.752 & 0.967 & 0.908 & 0.805 & 0.635 & 0.384 & 0.193  \\
  13 & 0.726 & 0.993 & 0.981 & 0.960 & 0.927 & 0.869 & 0.777  \\
  17 & 0.712 & 0.998 & 0.995 & 0.990 & 0.983 & 0.971 & 0.952 \\
  \hline
\end{tabular}  \label{tb:empiric}
\end{table}
As can be seen in Table~\ref{tb:empiric}, the correction capability improves as $\cellblocksize$ increases (but the rate decreases). Moving between rows in the table to different rate-correction tradeoffs is done flexibly without need to change the decoder. Note that there is no guarantee that given a certain number of errors it can always be fully corrected. Whether a certain error can be corrected depends on both the codeword and the error pattern. What does not show in the table is that even in instances that did not fully correct, many of the errors were indeed corrected. In fact, this property will be next shown to result in superior performance of NCC. The way we compare the codes' performance is by measuring the symbol-error rate (SER) at the decoder output for a given rate of asymmetric magnitude-$1$ errors at the input. In Fig.~\ref{fig:empiric1} we plot the output SER for three codes: even/odd, $BCH\left(15,5,7\right)$ and length $7$ NCC, with equal rate of $0.777$ (plus a no-coding option). The quoted BCH code is a binary code that is used in the construction of~\cite{asym} to get asymmetric magnitude-$1$ correction. The x-axis is the input SER, which is the probability that a symbol exhibits an asymmetric magnitude-$1$ error.
\begin{figure}[htbp]
   \centering
   \includegraphics[height=2.1in,keepaspectratio=true,width=0.5\textwidth]{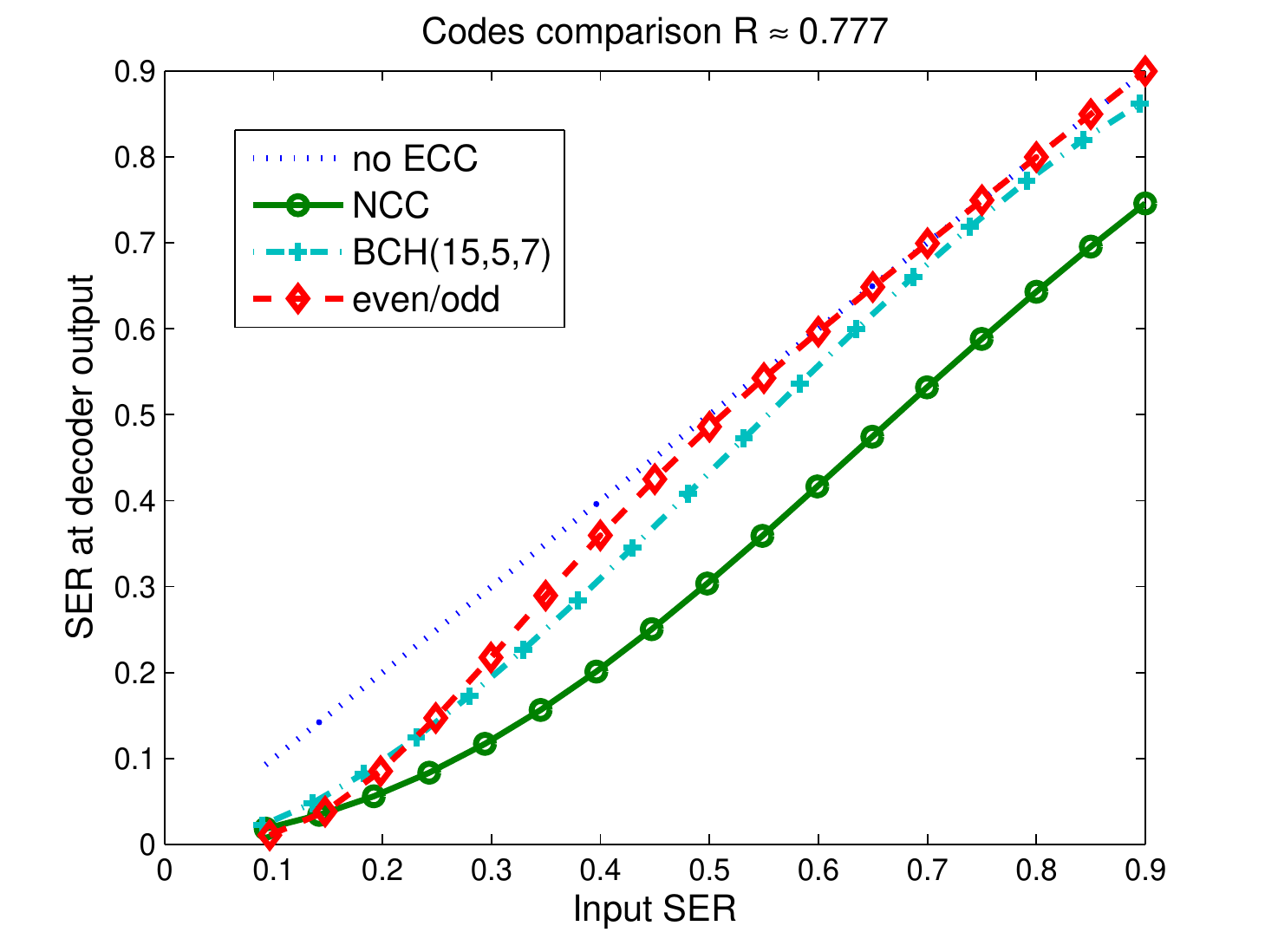}
   \caption{Correcting capabilities as a function of media SER for information rate of 0.777 and $q=8$. NCC (solid \& circles), BCH(15,5,7) (dashed \& diamonds), even-odd (semi-dashed \& pluses). Rate 1 no-ECC option is shown as reference (dotted). }
   \label{fig:empiric1}
\end{figure}
From Fig.~\ref{fig:empiric1} one can notice that the NCC code outperforms all the other codes with equal rate, for input SER values around 0.2 and higher. The SER remaining at the decoder output will be taken care of by a standard ECC, which in the case of NCC will require less redundancy because of the lower residual error rates. Note that the short lengths of the NCC code blocks do not limit the size of the memory word lines. Long word-lines can be supported by concatenating multiple NCC codewords.

\subsection{Bounds on performance}
In this sub-section, we study {\em analytically} the decoding performance of the NCC code. Here the objective is to analyze the block-error performance of the code, that is, its ability to correct {\em all} the errors that corrupted the codeword\footnote{Note that in Section~\ref{subsec:perf_exp} the focus is on the symbol-error and not block-error performance.}. Given $t$, we are interested in the probability $\cP_t$ that the ML decoder {\em fails} to correct all $t$ errors, where the probability is calculated with the uniform distribution on both the error locations and the stored codeword. Calculating this probability exactly seems a hard combinatorial problem, so we derive two upper bounds on $\cP_t$. The first upper bound (Theorem~\ref{th:LB1}) is simpler, albeit less tight. It also enables deriving the asymptotic behavior of the performance more easily. The second upper bound (Theorem~\ref{th:LB2}) is a much tighter bound, but more complicated to derive and analyze.
Let us first make the following definition.
\begin{definition} \label{df:subset_t}
Let $B_t$ be a subset of the NCC code, $B_t \subset NCC(\cellblocksize,q)$, which includes all the codewords whose histogram vector $\vect{h}$ satisfy  $ h_i>2t,\;\forall i: \; h_i \neq 0$.
\end{definition}
By Proposition~\ref{cor:error}, it is clear that for every codeword in $B_t$ enduring $t$ errors, it is guaranteed that all $t$ errors can be corrected.
\begin{theorem} \label{th:LB1}
Given that $t \leq \left\lfloor \frac{n}{5}\right\rfloor+1$ uniformly selected errors occurred in a uniformly selected $NCC(\cellblocksize,q)$ codeword, the ML decoding-failure probability $\cP_t$ is bounded from above by
\begin{equation}
\cP_t \leq  \frac{\sum_{k=1}^{q/2}k!F_{2t+1}\left(n,k\right) \cdot \left[\binom{q-k+1}{k}-1\right] \cdot \frac{2kt}{n} }{M},
\label{eq:tNaiveLB}
\end{equation}
where $M = q^{n\cR_{NCC}\left(\cellblocksize,q\right)}$ is the number of $n$-cell, $q$-level NCC codewords, and
\begin{equation}
F_{r}\left(n,k\right) \triangleq S\left(n,k\right)-S_{r}\left(n,k\right),
\end{equation}
where $S_r(n,k)$ are the $r$-associated Stirling numbers of the second kind~\cite{associated}. (The combinatorial interpretation of $S_r(n,k)$ is the number of partitions of an $n$-set to $k$ subsets, all of which are of cardinality at least $r$.)
\end{theorem}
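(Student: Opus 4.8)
The plan is to average the conditional decoding-failure probability over the uniform choice of stored codeword, discarding at the outset every codeword that Proposition~\ref{cor:error} already certifies as correctable. Writing $\cP_t = \frac{1}{M}\sum_{\vect{c}} \Pr[\text{fail}\mid \vect{c}]$, where the inner probability is over the uniform choice of $t$ error positions, Proposition~\ref{cor:error} gives $\Pr[\text{fail}\mid\vect{c}]=0$ whenever $\vect{c}\in B_t$, that is, whenever every occupied level of its histogram holds more than $2t$ cells (Definition~\ref{df:subset_t}). Hence only codewords outside $B_t$ contribute, and I would organize the remaining sum by the number $k$ of occupied levels, which for an NCC histogram ranges over $1\le k\le q/2$.

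Next I would count the contributing codewords with exactly $k$ occupied levels. The assignment of the $n$ cells to the $k$ levels is a surjection, and the requirement $\vect{c}\notin B_t$ means at least one level receives at most $2t$ cells; the number of such surjections is $k!\,F_{2t+1}(n,k)=k!\,[S(n,k)-S_{2t+1}(n,k)]$, since $S_{2t+1}(n,k)$ counts exactly the partitions whose blocks all have size at least $2t+1$. For the placement of the $k$ non-consecutive levels inside $\{0,\dots,q-1\}$ there are $\binom{q-k+1}{k}$ choices, but I would argue that one of them --- the ``top-packed'' placement $\{q-1,q-3,\dots,q-2k+1\}$ --- can never be confused: every occupied level either sits at $q-1$ (with no room to move up) or has the level two above it also occupied, so moving any occupation upward would create two consecutive occupied levels and violate the NCC. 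Excluding this single placement yields the factor $\binom{q-k+1}{k}-1$ and keeps the count a valid over-estimate of the failing codewords.

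The analytic core is the bound $\Pr[\text{fail}\mid\vect{c}]\le 2kt/n$ for a codeword with $k$ occupied levels not in $B_t$. Here I would use the element-wise confusability identity of Proposition~\ref{prop:hee}, together with the remark following Proposition~\ref{cor:error}, to show that a maximum-likelihood tie can only arise when the error pattern concentrates at one of the occupied levels in the specific way that equalizes the two competing upward-correction costs. Taking a union bound over the $k$ occupied levels, and estimating the probability that a uniform $t$-subset of the $n$ cells realizes such a concentration at a fixed level by at most $2t/n$, produces the factor $2kt/n$. The hypothesis $t\le\left\lfloor n/5\right\rfloor+1$ is what I expect to be the main obstacle: it must be invoked to guarantee that these single-level concentrations exhaust all failure modes, so that the general characterization of Proposition~\ref{prop:hee} collapses to the two scalar cases analyzed in the remark, and to ensure that the per-level estimate and the union bound remain valid.

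Finally I would assemble the pieces: multiplying the surjection count $k!\,F_{2t+1}(n,k)$, the placement count $\binom{q-k+1}{k}-1$, and the conditional-failure bound $2kt/n$, summing over $1\le k\le q/2$, and dividing by the total number $M=q^{n\cR_{NCC}(n,q)}$ of NCC codewords supplied by Theorem~\ref{th:NCCRate}, gives exactly the right-hand side of~\eqref{eq:tNaiveLB}.
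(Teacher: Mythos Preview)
Your overall skeleton matches the paper exactly: split off $B_t$ via Proposition~\ref{cor:error}, count $\bar{B}_t$ as $\sum_k k!F_{2t+1}(n,k)\bigl[\binom{q-k+1}{k}-1\bigr]$, bound the conditional failure probability by $2kt/n$, and divide by $M$. The combinatorial counting is right. There are two problems, one minor and one real.

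The minor one: your ``top-packed'' exclusion is wrong. Take $q=8$, $k=2$, $n=2$, levels $\{5,7\}$ with $h_5=h_7=1$. One error at level $5$ produces the received word with levels $\{4,7\}$, which is itself a valid NCC codeword and is output by the ML decoder with zero corrections. Your argument only rules out alternatives obtained by moving occupancy \emph{up}; it misses alternatives at or below the received word. The paper instead excludes the bottom-packed set $\{0,2,\ldots,2k-2\}$, using that level~$0$ cannot suffer a downward error. (This $-1$ is cosmetic anyway; the bound is valid without it.)

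The real gap is the $2kt/n$ step. You assert ``estimating the probability\ldots by at most $2t/n$'' with no mechanism, and you guess that $t\le\lfloor n/5\rfloor+1$ is needed so that ``single-level concentrations exhaust all failure modes.'' That is not where the hypothesis enters. The concentration statement --- failure forces some occupied level $i$ with $h_i\le 2t$ to absorb at least $\lceil h_i/2\rceil$ of the $t$ errors --- holds without any restriction on $t$. What remains is a maximization over $h=h_i$: the conditional failure contribution from a level with $h$ cells is $\sum_{j\ge 0}\binom{h}{\lceil h/2\rceil+j}\binom{n-h}{t-\lceil h/2\rceil-j}\big/\binom{n}{t}$, which the paper bounds by $\phi(h)\triangleq 2^{h-1}\binom{n-1}{t-\lceil h/2\rceil}\big/\binom{n}{t}$. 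The condition $t\le\lfloor n/5\rfloor+1$ is precisely what makes $\phi(h+2)\le\phi(h)$ (compare the ratio $\frac{4(t-1)}{n-t+1}$ to $1$), forcing the maximum to occur at $h=2$, where $\phi(2)=2\binom{n-1}{t-1}/\binom{n}{t}=2t/n$. Without this monotonicity you cannot replace the $h$-dependent bound by the single number $2t/n$, and the factor in~\eqref{eq:tNaiveLB} does not emerge. So the hypothesis is a calculus constraint on a hypergeometric sum, not a structural constraint on confusability; your plan as written does not reach it.
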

\begin{proof}
Let us consider the subset $B_t$ from Definition~\ref{df:subset_t}. By Proposition~\ref{cor:error}, it is clear that for every codeword $\vect{c} \in NCC(\cellblocksize,q)$
\begin{gather}
Prob\left(failure \,|\, \vect{c} \in B_t \right)=0, \nonumber \\
Prob\left(failure \,|\, \vect{c} \in \bar{B}_t \right) \geq 0,
\end{gather}
where $\bar{B}_t$ is the codeword subset complementary to $B_t$. For the bound we need to analyze the dependence of decoding failure on the error pattern, in addition to its dependence on $\vect{c}$. For a given codeword in $\bar{B}_t$, some error patterns cause a decoding failure while others do not. Let us define $E\left(\vect{c}\right)$ as the set of all the error patterns $\vect{e}$ which make the decoder fail on codeword $\vect{c}$. Therefore, the total decoding-failure probability is given by
\begin{equation} \label{eq:boundproof0}
Prob\left(failure\right) = Prob\left( \vect{c} \in \bar{B}_t \right) \cdot Prob\left( \vect{e} \in E\left(\vect{c}\right)|\vect{c} \in \bar{B}_t  \right).
\end{equation}
We first calculate an upper bound on $Prob\left( \vect{c} \in \bar{B}_t \right)$ by calculating an upper bound on the size of the subset $\bar{B}_t$. The definition of $B_t$ implies that each codeword in $\bar{B}_t$ contains at least one memory level occupied by $2t$ cells or less. $F_{2t+1}\left(n,k\right)$ is the number of partitions of an $n$-set to $k$ subsets in which at least one subset is of cardinality less than $2t+1$. Recall the expression for counting NCC codewords~\eqref{eq:NCCrate}, replacing $S\left(n,k\right)$ by $F_{2t+1}\left(n,k\right)$ and subtracting a single combination of levels from $ \binom{q-k+1}{k}$ (since the level-set $\{0,2,4,\ldots,2k-2\}$ can always correct $t$ errors regardless of the assignment of $n$ cells to the $k$ levels) gives
\begin{equation}
\left|\bar{B}_t\right| \leq \sum_{k=1}^{q/2}k!F_{2t+1}\left(n,k\right) \cdot \left[\binom{q-k+1}{k}-1\right].
\label{eq:boundproof1}
\end{equation}
To get the upper bound on $Prob\left( \vect{c} \in \bar{B}_t \right)$, all left to do is to divide~\eqref{eq:boundproof1} by $M$.
Next, we calculate an upper bound on the probability $Prob\left( \vect{e} \in E\left(\vect{c}\right)|\vect{c} \in \bar{B}_t \right)$ by upper bounding the size of $E\left(\vect{c}\right)$ for $\vect{c} \in \bar{B}_t$. 
If $\vect{c} \in \bar{B}_t$, by definition there is a level $i$ in the codeword histogram $\vect{h}$ such that $h_i\leq 2t$. If an error vector caused decoding failure to $\vect{c}$ it means that at least $\left\lceil \frac{h_i}{2} \right\rceil$ of the $t$ errors occurred in level $i$. Thus given $i$ and $h_i$, the number of failing error words equals
\begin{equation}
\sum_{j=0}^{\left\lfloor \frac{h_i}{2} \right\rfloor}\binom{h_i}{\left\lceil \frac{h_i}{2} \right\rceil+j} \binom{n-h_i}{t-\left\lceil \frac{h_i}{2} \right\rceil-j}.
\label{eq:given_i_hi}
\end{equation}
If we maximize~\eqref{eq:given_i_hi} over all possible $h_i$ and multiply it by the number $k$ of occupied levels in $\vect{c}$, we get the union-bound
\begin{equation}
|E\left(\vect{c}|k\right)| \leq k \max_{1\leq h_i \leq 2t} \left[\sum_{j=0}^{\left\lfloor \frac{h_i}{2} \right\rfloor}\binom{h_i}{\left\lceil \frac{h_i}{2} \right\rceil+j} \binom{n-h_i}{t-\left\lceil \frac{h_i}{2} \right\rceil-j}\right].
\label{eq:bound_E}\end{equation}
From this bound we can get a bound on the second term in~\eqref{eq:boundproof0} by dividing by the number of error words
\begin{gather}
Prob\left( \vect{e} \in E\left(\vect{c}\right)|\vect{c} \in \bar{B}_t ,k \right) \leq \\ \nonumber k \max_{1\leq h_i \leq 2t} \left[\sum_{j=0}^{\left\lfloor \frac{h_i}{2} \right\rfloor}\binom{h_i}{\left\lceil \frac{h_i}{2} \right\rceil+j} \binom{n-h_i}{t-\left\lceil \frac{h_i}{2} \right\rceil-j}\right]/\binom{n}{t}.
\label{eq:bound_cond_prob}
\end{gather}
In order to get an expression that does not depend on $h_i$, we upper bound the argument of the $\max$ function. For convenience we replace $h_i$ with $h$
\begin{gather}\label{eq:boundproof3}
\sum_{j=0}^{\left\lfloor \frac{h}{2} \right\rfloor}\binom{h}{\left\lceil \frac{h}{2} \right\rceil+j} \binom{n-h}{t-\left\lceil \frac{h}{2} \right\rceil-j} \leq \binom{n-h}{t-\left\lceil \frac{h}{2} \right\rceil} \cdot \sum_{j=0}^{\left\lfloor \frac{h}{2} \right\rfloor}\binom{h}{\left\lceil \frac{h}{2} \right\rceil+j}= \\ \nonumber
=2^{h-1} \binom{n-h}{t-\left\lceil \frac{h}{2} \right\rceil} \leq 2^{h-1} \binom{n-1}{t-\left\lceil \frac{h}{2} \right\rceil} \equiv \phi\left(h \right).
\end{gather}
We now focus on the properties of $ \phi\left(h \right)$
\begin{gather}
\phi\left(h+2 \right) = 2^{h+1} \binom{n-1}{t-\left\lceil \frac{h}{2} \right\rceil-1} = \frac{4\left(t-\left\lceil \frac{h}{2} \right\rceil \right)}{n - t+\left\lceil \frac{h}{2} \right\rceil}  \phi\left(h \right) \leq \\ \nonumber \leq \frac{4\left(t-1 \right)}{n - t+1}  \phi\left(h \right),
\end{gather}
which means that for $t\leq n/5+1$, we get that $ \phi\left(h \right) \geq \phi\left(h+2 \right)$ for every $h$. Therefore, $\phi\left(h \right)$ is a monotonically decreasing function for its odd and even entries separately. It is also easy to verify that under this condition, we also get that $\phi\left(2 \right) \geq \phi\left(1 \right)$. As a result, $ \phi\left(h \right)$ is bounded from above by
\begin{equation}
\phi\left(2 \right) = 2 \binom{n-1}{t-1}.
\end{equation}
Hence, we get that
\begin{equation} \label{eq:boundproof4}
Prob\left( \vect{e} \in E\left(\vect{c}\right) |\, \vect{c} \in \bar{B}_t,k \right) \leq \frac{2k\binom{n-1}{t-1}}{\binom{n}{t}} = \frac{2kt}{n}.
\end{equation}
Using the expression in~\eqref{eq:boundproof0}, an upper bound on the failure probability is obtained by multiplying~\eqref{eq:boundproof4} by~\eqref{eq:boundproof1}, and dividing by $M$.
\end{proof}

For the case in which a single error occurs, it is possible to derive the asymptotic behavior for the failure probability when $n$ is large.
\begin{theorem} \label{th:asym}
For $n$-cell, $q$-level NCC, when $q$ is fixed and $n \rightarrow \infty$, the probability $\cP_{t=1}$ of failing to correct a single error approaches $0$ exponentially with respect to $n$
\begin{equation}
\cP_1 \simeq c \cdot n\left(1-\frac{2}{q}\right)^{n},
\end{equation}
where $c$ is some constant.
\end{theorem}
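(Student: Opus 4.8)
The plan is to obtain the claimed exponential decay by specializing the finite-length upper bound of Theorem~\ref{th:LB1} to the case $t=1$ and extracting its leading asymptotic order as $n\to\infty$ with $q$ fixed. For $t=1$ the bound reads
\[
\cP_1 \le \frac{\sum_{k=1}^{q/2} k!\,F_{3}(n,k)\,\bigl[\binom{q-k+1}{k}-1\bigr]\,\tfrac{2k}{n}}{M},\qquad M=\sum_{k=1}^{q/2}k!\,S(n,k)\binom{q-k+1}{k},
\]
so the whole problem reduces to the large-$n$ behaviour of the surjection counts $k!\,S(n,k)$ and of $k!\,F_3(n,k)=k!\,[S(n,k)-S_3(n,k)]$, after which one argues that this order is also achieved from below so that the two-sided estimate ``$\simeq$'' holds.

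First I would settle the denominator. Since $k!\,S(n,k)$ is the number of surjections of an $n$-set onto a $k$-set, $k!\,S(n,k)=\sum_{i=0}^{k}(-1)^i\binom{k}{i}(k-i)^n\sim k^{\,n}$ for fixed $k$ as $n\to\infty$. Hence every term of $M$ grows like $k^n$, the sum is dominated by the largest admissible index $k=q/2$, for which $\binom{q-k+1}{k}=\binom{q/2+1}{q/2}=q/2+1$, and one gets $M\sim (q/2+1)\,(q/2)^n$. This is consistent with the common limiting rate established in Theorem~\ref{th:proofApp}.

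Next I would analyze the numerator. The combinatorial input is that $F_3(n,k)$ counts partitions of an $n$-set into $k$ blocks at least one of which has size at most $2$; for fixed $k$ this is dominated by partitions possessing a single block of size exactly $2$, giving $k!\,F_3(n,k)\sim \binom{n}{2}\,k\,(k-1)^{\,n-2}\sim \tfrac{k}{2}\,n^2\,(k-1)^{\,n-2}$ (the size-$1$-block partitions are smaller by a factor $\sim n/(k-1)$ and are negligible). As in the denominator, the $k$-sum is governed by the top term $k=q/2$, where $(q/2-1)^{\,n}$ is largest and $\binom{q-k+1}{k}-1=q/2$. Substituting these estimates together with $\tfrac{2k}{n}=q/n$, the leading contribution of the numerator is a constant multiple of $n\,(q/2-1)^{\,n-2}$, and dividing by $M\sim(q/2+1)(q/2)^n$ yields
\[
\cP_1 \;\lesssim\; c\,n\Bigl(\tfrac{q/2-1}{q/2}\Bigr)^{n}=c\,n\Bigl(1-\tfrac{2}{q}\Bigr)^{n},\qquad c=\frac{q^{3}/8}{(q/2+1)(q/2-1)^{2}},
\]
which is exactly the asserted rate (with an explicit constant).

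To upgrade ``$\lesssim$'' to ``$\simeq$'' I would exhibit enough genuine decoding-failure instances to match the upper bound up to the constant, using the single-error confusable patterns already enumerated in the discussion following Proposition~\ref{prop:hee}: codewords built on a level-set of near-maximal size that carry a bounded-size occupation at one level matching such a pattern are uncorrectable, and counting them reproduces the same exponential order. The main obstacle is not the dominant term but the error terms: one must show rigorously that $F_3(n,k)$ is dominated by its size-$2$-block partitions, that the contributions of all $k<q/2$ (as well as the subtracted ``$-1$'' and $S_3$ terms) are exponentially negligible so that only the single index $k=q/2$ survives in numerator and denominator, and that the bounding steps in the proof of Theorem~\ref{th:LB1} are asymptotically tight for $t=1$. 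Controlling these associated-Stirling asymptotics and confirming the tightness is where the real work lies.
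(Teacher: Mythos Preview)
Your approach is essentially the paper's: specialize the bound of Theorem~\ref{th:LB1} to $t=1$, use $k!\,S(n,k)\sim k^{n}$ so that both numerator and denominator are dominated by $k=q/2$, and read off the ratio $n\bigl(\tfrac{q/2-1}{q/2}\bigr)^{n}$. The one technical difference is how $F_3(n,k)$ is estimated: the paper derives the asymptotics of $S_3(n,k)$ (and hence $F_3$) by iterating Howard's recurrences for $r$-associated Stirling numbers and then subtracting, whereas you argue directly that partitions counted by $F_3$ are dominated by those with a single block of size~$2$, giving $k!\,F_3(n,k)\sim\binom{n}{2}k(k-1)^{n-2}$. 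Your route is shorter and avoids the cited identities, at the cost of needing an ad hoc argument that size-$1$ blocks and multiple small blocks are lower order; the paper's route is more mechanical but yields the full expansion with explicit error terms. Your constant $c=q^{3}/[(q+2)(q-2)^{2}]$ is what the paper's own computation actually produces, and you are right to flag that a matching lower bound (via the $1$-confusable histogram patterns) is needed to justify the two-sided ``$\simeq$''---the paper's proof treats only the upper bound.
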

\begin{proof}
Due to the fact that asymptotically $S\left(n,k\right)\sim \frac{k^n}{k!}$ when $k$ is fixed and $n \rightarrow \infty$, and given the following recurrence identity for the $r$-associated Stirling numbers of the second kind~\cite{howard}
\begin{gather}
S_2\left(n,k\right)=n! \sum_{j=0}^k \frac{(-1)^jS\left(n-j,k-j\right)}{j!\left(n-j\right)!} \nonumber \\
S_3\left(n,k\right)=n! \sum_{j=0}^k \frac{(-1)^jS_2\left(n-2j,k-j\right)}{2^jj!\left(n-2j\right)!}
\end{gather}
we get for $n \rightarrow \infty$ that
\begin{gather}
S_2\left(n,k\right) = S\left(n,k\right) - nS\left(n-1,k-1\right) + \nonumber \\
+\binom{n}{2}\cdot S\left(n-2,k-2\right) +\cO\left(n^3 \left(k-3\right)^{n}\right) \nonumber \\
S_3\left(n,k\right) = S_2\left(n,k\right) - \binom{n}{2}\cdot S_2\left(n-2,k-1\right)+ \nonumber \\
+3\binom{n}{4}\cdot S_2\left(n-4,k-2\right)+ \cO\left(n^5 \left(k-3\right)^{n}\right).
\end{gather}
Note that the terms marked as $\cO\left(\cdot\right)$ functions become negligible for large $n$ because they have smaller bases for the exponent. After substituting the expression for $S_2\left(n,k\right)$ inside $S_3\left(n,k\right)$, we get
\begin{gather}
S_3\left(n,k\right) = S\left(n,k\right) - nS\left(n-1,k-1\right) +\binom{n}{2}\cdot S\left(n-2,k-2\right) + \nonumber \\
+ \cO\left(n^3 \left(k-3\right)^{n}\right)-\binom{n}{2}S\left(n-2,k-1\right)+n\binom{n}{2}S\left(n-3,k-2\right)+ \nonumber \\ +\cO\left(n^4 \left(k-3\right)^{n}\right) +3\binom{n}{4}S\left(n-4,k-2\right)+\cO\left(n^5 \left(k-3\right)^{n}\right).
\end{gather}
Therefore, after replacing the Stirling functions with their asymptotic expressions, and after taking the most dominant $\cO\left(\cdot\right)$ terms, we also get that
\begin{gather}
F_3\left(n,k\right) = S(n,k)-S_3(n,k) = n\left(1+\frac{n-1}{2k-2}\right)\cdot \frac{ \left(k-1\right)^{n-1} }{\left(k-1\right)!}+  \nonumber \\ \frac{n\left(n-1\right)}{2}\left(\frac{(n-2)(n-3)}{4(k-2)^2}-\frac{n}{k-2}-1\right)\cdot \frac{ \left(k-2\right)^{n-2} }{\left(k-2\right)!}+  \nonumber \\
+ \cO\left(n^5 \left(k-3\right)^{n}\right).
\end{gather}
Obviously, when $n \rightarrow \infty$ we get that
\begin{gather}
F_3\left(n,k\right) =  n\left(1+\frac{n-1}{2k-2}\right)\cdot \frac{ \left(k-1\right)^{n-1} }{\left(k-1\right)!}
+ \cO\left(n^4 \left(k-2\right)^{n}\right).
\end{gather}
We then only take the most dominant terms in both the numerator and the denominator of~\eqref{eq:tNaiveLB}. For $t=1$, the most dominant term in the numerator is the $k=q/2$ term of the sum; the most dominant term in the denominator is the $k=q/2$ term of~\eqref{eq:NCCrate} (see also Lemma~\ref{prop:ncc_eq_evenodd2} in Appendix~\ref{app:A}). After some manipulation we get
\begin{gather}
\cP_1 \simeq \frac{\left(\frac{q}{2}\right)!  \frac{n\left(1+\frac{n-1}{q-2}\right)\left(\frac{q}{2}-1\right)^{n-1}}{\left(\frac{q}{2}-1\right)!} \frac{q}{2} \frac{q}{n} +\cO\left(n^3 \left(\frac{q}{2}-2\right)^{n-2} \right)}{\left(\frac{q}{2}\right)^n \left(\frac{q}{2}+1\right)}=\nonumber \\
=\frac{2q^2 n\left(1-\frac{2}{q}\right)^{n}}{\left(q+2\right)\left(q-2\right)^2}+\cO\left(n^3 \left(1-\frac{4}{q}\right)^{n} \right)\simeq c\cdot n\left(1-\frac{2}{q}\right)^{n}.
\end{gather}
\end{proof}
The following upper bound is applicable to higher number of errors $t$, and it is tighter when $n$ is not too large. It is a refinement of the upper bound in Theorem~\ref{th:LB1}, hence, its proof is based on the proof of Theorem~\ref{th:LB1}.
\begin{definition}
Let $\bar{B}_{k,h}$ be the subset of NCC codewords occupying $k$ levels where at least one level is occupied by \emph{exactly} $h$ cells.
\end{definition}
\begin{definition} \label{def:gamma}
Let $\Gamma_h\left(n,k\right) $ be the the number of partitions of an $n$-set to $k$ sets in which at least one of the $k$ sets is of cardinality $h$.
\end{definition}
\begin{lemma} \label{lm:set_exact}
The function $\Gamma_h\left(n,k\right) $ is bounded from above by
\begin{equation} \label{eq:lemmaUB}
\Gamma_h\left(n,k\right) \leq \binom{n}{h} \cdot k! \cdot S \left(n-h,k-1\right).
\end{equation}
\end{lemma}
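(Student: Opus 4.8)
The plan is to bound $\Gamma_h(n,k)$ by a deliberate over-count that reconstructs every qualifying partition from a distinguished block of size exactly $h$ together with a partition of the leftover elements. Concretely, I would associate to each partition counted by Definition~\ref{def:gamma} (one having at least one block of cardinality $h$) a pair consisting of (a) one designated block of size $h$, chosen among its size-$h$ blocks, and (b) the configuration induced on the remaining $n-h$ elements across the other $k-1$ blocks. Running this correspondence in reverse, I would count all such pairs and argue that this total dominates $\Gamma_h(n,k)$, since every qualifying partition is produced by at least one pair.

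The counting step is then routine. First I would choose the $h$ elements of the designated block in $\binom{n}{h}$ ways. The remaining $n-h$ elements must be distributed among the other $k-1$ nonempty blocks. Because in the NCC application each block is attached to a distinct occupied level, I would treat the blocks as labeled, so distributing the leftover elements among the $k-1$ remaining labels amounts to a surjection, giving $(k-1)!\,S(n-h,k-1)$ choices; multiplying by the $k$ possible labels for the designated size-$h$ block yields $k\cdot\binom{n}{h}\cdot(k-1)!\,S(n-h,k-1)=\binom{n}{h}\,k!\,S(n-h,k-1)$. Since this construction generates every qualifying configuration at least once, the bound~\eqref{eq:lemmaUB} follows.

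The step I would be most careful about is the over-counting: a partition possessing two or more blocks of size $h$ is generated more than once, and one must confirm that this only inflates the right-hand side so that the inequality is preserved—which is exactly what an upper bound permits. I would also remark that if one reads the blocks as unlabeled, the identical argument yields the sharper $\binom{n}{h}\,S(n-h,k-1)$, whence the stated bound follows a fortiori from $k!\ge 1$; in this way the factor $k!$ is best understood as the level-assignment (surjection) multiplicity that makes the lemma directly compatible with the codeword count $k!\,S(n,k)\binom{q-k+1}{k}$ used throughout.
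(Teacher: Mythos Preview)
Your proposal is correct and follows essentially the same over-counting argument as the paper: pick a level and $h$ cells for it in $k\binom{n}{h}$ ways, then surject the remaining $n-h$ cells onto the other $k-1$ levels in $(k-1)!\,S(n-h,k-1)$ ways, and observe that partitions with multiple size-$h$ blocks are generated more than once. Your added remark distinguishing the labeled/unlabeled readings and explaining the role of the $k!$ factor is a helpful clarification that the paper leaves implicit.
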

\begin{proof}
Let us choose such a partition by first assigning one of the levels to $h$ cells, and then partitioning the rest of the $n-h$ cells to the $k-1$ remaining levels. The number of choices for the latter is $(k-1)! \cdot S \left(n-h,k-1\right)$. In the first assignment the chosen level and the $h$ cells can be arbitrary, therefore, we also need to multiply the outcome by $k\cdot\binom{n}{h}$. This selection procedure may over count, because the same $h$ set assigned in the first step may be also assigned in the latter step for a different permutation of the $k$ levels.
\end{proof}
\begin{theorem} \label{th:LB2}
Given that $t < \left\lfloor\frac{n}{2}\right\rfloor$ uniformly selected errors occurred in a uniformly selected $NCC(\cellblocksize,q)$ codeword, the ML decoding-failure probability $\cP_t$ is bounded from above by
\begin{equation} \label{eq:lowerbound2}
\cP_t \leq \frac{\sum\limits_{h=1}^{2t} P\left(n,t,h\right) \sum\limits_{k=2}^{\frac{q}{2}} k \cdot \Gamma_h\left(n,k\right) \left[\binom{q-k+1}{k}-1\right]}{M},
\end{equation}
where $\Gamma_r\left(n,k\right)$ is defined in Definition~\ref{def:gamma}, $M = q^{n\cR_{NCC}\left(\cellblocksize,q\right)}$ is the number of $n$-cell, $q$-level NCC codewords, and
\begin{equation}\label{eq:Pnth}
P\left(n,t,h\right) = \sum_{j=0}^{\left\lfloor \frac{h}{2} \right\rfloor}\binom{h}{\left\lceil \frac{h}{2} \right\rceil+j} \binom{n-h}{t-\left\lceil \frac{h}{2} \right\rceil-j}/\binom{n}{t}.
\end{equation}
\end{theorem}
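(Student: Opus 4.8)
The plan is to refine the proof of Theorem~\ref{th:LB1} by replacing its worst-case conditional failure probability $\frac{2kt}{n}$ (obtained there by maximizing $\phi(h)$ and then applying a uniform union bound over the $k$ occupied levels) with a finer, occupation-dependent quantity. The starting point is the same correctability criterion: by Proposition~\ref{cor:error} a uniformly chosen codeword $\vect{c}$ can be mis-decoded only if its histogram $\vect{h}$ has an occupied level $i$ with $h_i\leq 2t$, and—exactly as argued in the proof of Theorem~\ref{th:LB1}—such a level can induce failure only when at least $\left\lceil \frac{h_i}{2}\right\rceil$ of the $t$ errors land on it. Expression~\eqref{eq:given_i_hi} divided by $\binom{n}{t}$ is precisely the probability of this event for a single level of occupation $h$, which is the quantity $P(n,t,h)$ in~\eqref{eq:Pnth}. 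Thus $P(n,t,h)$ upper bounds the failure contribution of one level of occupation $h$, and the whole refinement amounts to keeping the dependence on $h$ rather than maximizing it away.

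First I would write the per-codeword union bound $Prob(\text{failure}\mid\vect{c})\leq\sum_{i:\,0<h_i\leq 2t}P(n,t,h_i)$, valid because failure forces at least one low-occupation level to be hit $\left\lceil h_i/2\right\rceil$ times. Averaging over $\vect{c}$ and regrouping the double sum by the occupation value $h$ and the number $k$ of occupied levels gives $\cP_t\leq \frac{1}{M}\sum_{h=1}^{2t}P(n,t,h)\sum_{k}N_{k,h}$, where $N_{k,h}$ counts the (codeword, low-occupation-level) incidences with $k$ occupied levels and that level holding exactly $h$ cells. The remaining task is to bound $N_{k,h}$. An NCC codeword with $k$ occupied levels is a choice of $k$ non-consecutive levels—$\binom{q-k+1}{k}$ possibilities, of which the maximally packed even set $\{0,2,\dots,2k-2\}$ is always correctable (by the remark following Proposition~\ref{cor:error}) and may be dropped, leaving $\binom{q-k+1}{k}-1$—together with a surjective assignment of the $n$ cells onto these $k$ levels. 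The number of assignments placing exactly $h$ cells on a chosen level is controlled by $\Gamma_h(n,k)$ from Definition~\ref{def:gamma}, and the factor $k$ accounts for which of the $k$ levels carries the critical occupation $h$; this is what should produce the per-$(h,k)$ term $k\,\Gamma_h(n,k)\big[\binom{q-k+1}{k}-1\big]$, and dividing the assembled sum by $M$ yields~\eqref{eq:lowerbound2}. The ranges $2\leq k\leq q/2$ and $1\leq h\leq 2t$ arise because single-level codewords have $h=n>2t$ (as $t<\lfloor n/2\rfloor$) and so never fail, while only occupations $h\leq 2t$ are uncorrectable.

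The main obstacle is the counting of $N_{k,h}$, and this is exactly where Lemma~\ref{lm:set_exact} enters: it bounds the number of partitions possessing a block of size $h$ by $\binom{n}{h}k!\,S(n-h,k-1)$, the overcount obtained by first reserving $h$ cells for a distinguished level and then partitioning the remainder onto the other $k-1$ levels. The delicate point is the bookkeeping of multiplicity. When a codeword has several levels of occupation $\leq 2t$—or several levels of the very same occupation $h$—the union bound counts it once per such level, so I must ensure the incidence count is genuinely dominated by $k\,\Gamma_h(n,k)[\binom{q-k+1}{k}-1]$ rather than undercounted. This requires matching the labeled (surjection) bookkeeping behind $M=\sum_k k!\,S(n,k)\binom{q-k+1}{k}$ against the partition count inside $\Gamma_h$, and absorbing the at-most-$k$ size-$h$ blocks of any one partition into the explicit factor $k$. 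I would also re-justify the $-1$ (the always-correctable even-spaced level set) in this refined setting, and sanity-check the entire regrouping both against the looser bound of Theorem~\ref{th:LB1} and on small cases such as $n=4,5$ with $q=8$, to confirm that~\eqref{eq:lowerbound2} is a valid upper bound and is indeed tighter than~\eqref{eq:tNaiveLB}.
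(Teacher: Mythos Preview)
Your proposal is correct and follows essentially the same route as the paper: the per-level union bound via $P(n,t,h)$, the regrouping by $(h,k)$, the crude multiplicity bound (the paper writes it as $N(h\in hist(\vect{c}))\leq k\,\delta(h\in hist(\vect{c}))$, which is exactly your ``at most $k$ incidences per codeword''), the codeword count $\Gamma_h(n,k)\big[\binom{q-k+1}{k}-1\big]$, and the range restrictions $k\geq 2$, $h\leq 2t$ all match the paper's argument. One small remark: the theorem as stated uses $\Gamma_h$ itself, so Lemma~\ref{lm:set_exact} is only an optional computational upper bound on $\Gamma_h$ (the paper even notes an exact formula can be substituted), not a structural step in the proof.
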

\begin{proof}
In this proof we further refine the enumeration of NCC codewords beyond the class $\bar{B}_t$ used at the right-hand side of ~\eqref{eq:boundproof0} in Theorem~\ref{th:LB1}. Let us denote the histogram of an NCC codeword $\vect{c}$ with $k$ non-zero elements by $hist\left(\vect{c}\right)=\left[h_1,h_2,\ldots,h_k\right]$. Recall from the proof of Theorem~\ref{th:LB1} that for an uncorrectable error to occur in a level occupied by $h$ cells, $\left\lceil h/2 \right\rceil$ errors or more must occur in that exact level. The probability of this event is $P\left(n,t,h\right)$ in~\eqref{eq:Pnth}, which follows easily from~\eqref{eq:given_i_hi}. We now can bound the failure probability by the union bound
\begin{gather}\label{eq:Pfail_first_bound}
Prob\left(failure\right) \leq \\ \nonumber
\sum\limits_{k=1}^{\frac{q}{2}} \sum_{h_1,\ldots,h_k}
Prob\left(hist\left(\vect{c}\right)=\left[h_1,\ldots,h_k\right] \right) \sum\limits_{i=1}^{k} P\left(n,t,h_i\right).
\end{gather}
We can re-write the inner sum as
\begin{equation}\label{eq:h_no_i}
\sum\limits_{i=1}^{k} P\left(n,t,h_i\right) = \sum\limits_{h=1}^{2t} P\left(n,t,h\right) \cdot N\left(h \in hist\left(\vect{c}\right)\right),
\end{equation}
where $N\left(h \in hist\left(\vect{c}\right)\right)$ counts the number of times $h$ appears in $hist\left(\vect{c}\right)$. A simple upper bound is
\begin{equation}\label{eq:N_bound}
N\left(h \in hist\left(\vect{c}\right)\right)  \leq k \delta\left(h \in hist\left(\vect{c}\right)\right),
\end{equation}
where $\delta(\cdot)$ is the indicator function. Now we can write
\begin{gather}\label{eq:hist_delta}
\sum_{h_1,\ldots,h_k}Prob\left(hist\left(\vect{c}\right)=\left[h_1,\ldots,h_k\right] \right) \delta\left(h \in hist\left(\vect{c}\right)\right) = \\ \nonumber Prob\left( \vect{c} \in \bar{B}_{k,h} \right).
\end{gather}
Now substituting~\eqref{eq:N_bound} in~\eqref{eq:h_no_i}, and then~\eqref{eq:h_no_i} in~\eqref{eq:Pfail_first_bound}, we get after reordering the summation and substituting~\eqref{eq:hist_delta}
\begin{gather}\label{eq:Pfail_last_bound}
Prob\left(failure\right) \leq
\sum\limits_{k=1}^{\frac{q}{2}} \sum\limits_{h=1}^{2t}
Prob\left( \vect{c} \in \bar{B}_{k,h} \right) \cdot k \cdot P\left(n,t,h\right).
\end{gather}
By the same arguments of Theorem~\ref{th:LB1}, we get that
\begin{equation} \label{eq:bound_sec1}
\left|\bar{B}_{k,h}\right| = \Gamma_h\left(n,k\right) \binom{q-k+1}{k}.
\end{equation}
So, in order to calculate $Prob\left( \vect{c} \in \bar{B}_{k,h} \right)$ all left to do is to divide~\eqref{eq:bound_sec1} by the total number of NCC codewords $M$. Note that due to the fact that $t \leq \left\lfloor \frac{n}{2}\right\rfloor$, when only one level is occupied in an NCC codeword, any error can be corrected. As a result, we get that the failure probability is $0$ for these combinations, and the summation in~\eqref{eq:Pfail_last_bound} can start from $k=2$. In addition, similarly to Theorem~\ref{th:LB1}, we here too can subtract $1$ from $\binom{q-k+1}{k}$.
Applying these refinements in~\eqref{eq:Pfail_last_bound} and changing summation order give us the expression for the upper bound in~\eqref{eq:lowerbound2}.
\end{proof}
It is possible to derive a tighter expression for the lower bound in Theorem~\ref{th:LB2}, by using the exact count of $\Gamma_r\left(n,k\right)$ given by~\cite{mark}:
\begin{equation}
\Gamma_r\left(n,k\right) = n! \sum\limits_{j=0}^{\min(k, \lfloor n/r\rfloor)} \frac{(-1)^j}{(n-jr)! j!(r!)^j} S \left(n-jr,k-j\right).
\end{equation}
Fig.~\ref{fig:NCCBounds_t1} and Fig.~\ref{fig:NCCBounds_t2} present the upper bounds on the decoding-failure probability for $t=1$ and $t=2$ errors, respectively, as a function of the block-length $n$. They are compared to the experimental results presented in Section~\ref{sec:perfrom}. As can be seen, both upper bounds approach the experimental values for higher values of $n$ and lower values of $t$. We also may notice that the upper bound of Theorem~\ref{th:LB2} is much tighter than Theorem~\ref{th:LB1} for lower values of $n$.
\begin{figure}[htbp]
   \centering
   \includegraphics[height=2.1in,keepaspectratio=true,width=0.5\textwidth]{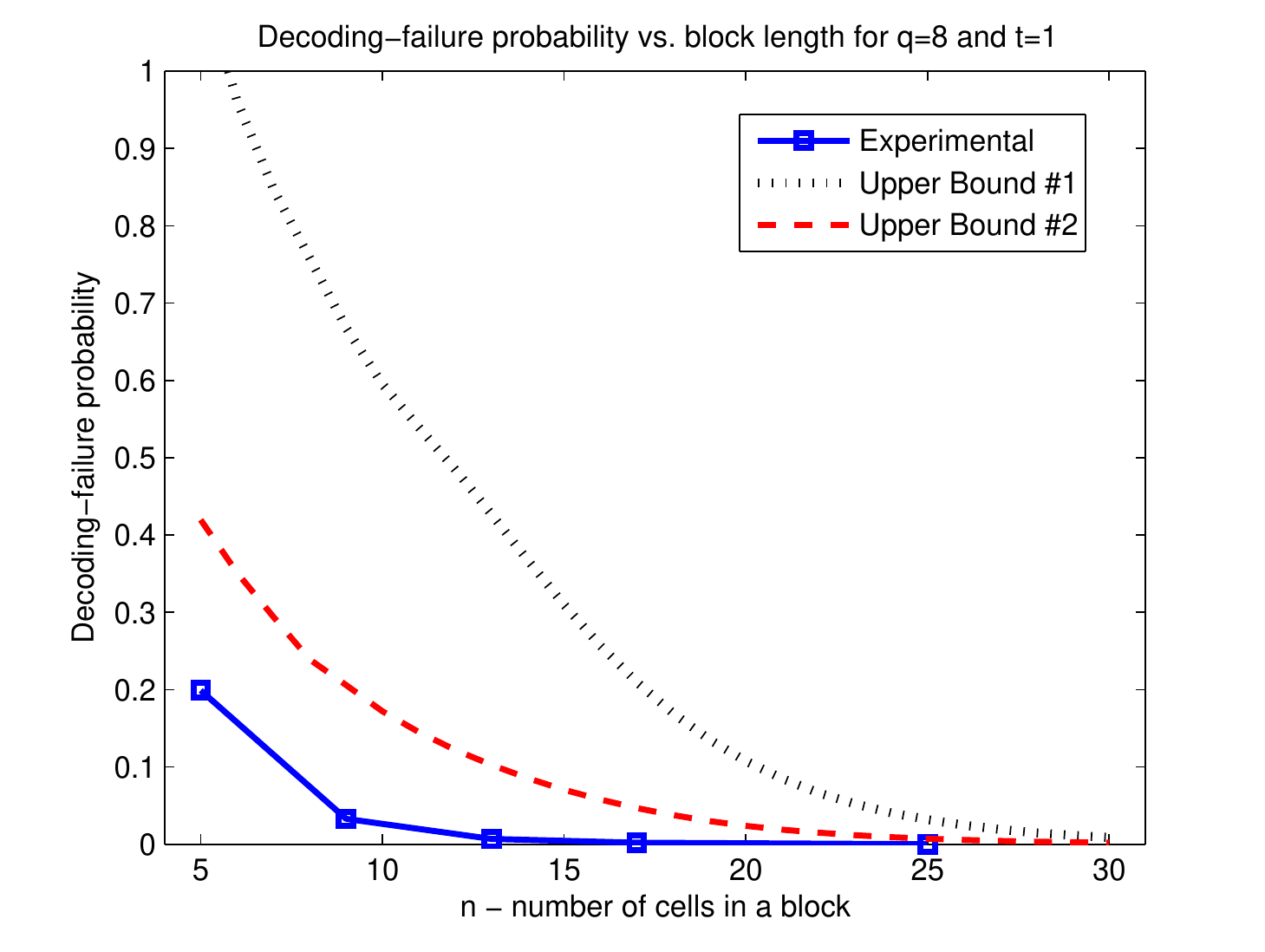}
   \caption{Upper bounds on the probability to fail decoding $1$ error in comparison with experimental results. Solid squared line - experimental results, dotted line - upper bound of Theorem~\ref{th:LB1}, dashed line - upper bound of Theorem~\ref{th:LB2}.}
   \label{fig:NCCBounds_t1}
\end{figure}
\begin{figure}[htbp]
   \centering
   \includegraphics[height=2.1in,keepaspectratio=true,width=0.5\textwidth]{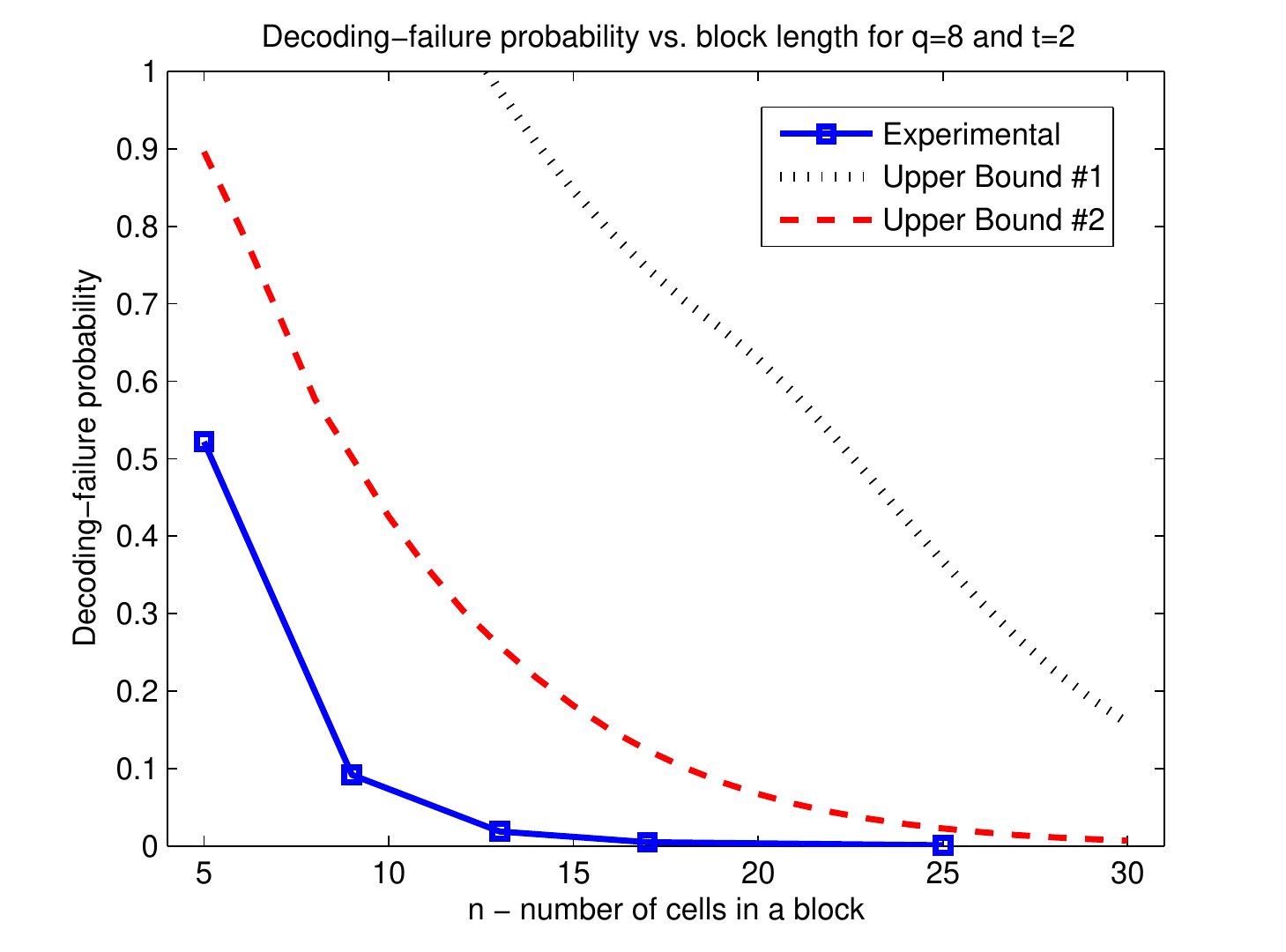}
   \caption{Upper bounds on the probability to fail decoding $2$ errors in comparison with experimental results. Solid squared line - experimental results, dotted line - upper bound of Theorem~\ref{th:LB1}, dashed line - upper bound of Theorem~\ref{th:LB2}.}
   \label{fig:NCCBounds_t2}
\end{figure}

\section{The Q-ary Z-Channel Analysis} \label{sec:model}
\subsection{Capacity and capacity achieving distributions}\label{subsec:z_ch}
A multi-level memory enduring asymmetric, magnitude-1 errors can be modeled by the $q$-ary Z-channel shown in Fig.~\ref{fig:qZcahnnel}. This channel is essentially Shannon's \emph{noisy typewriter} channel, only with a general transition probability $p$ and without the wrap-around transition from $0$ to $q-1$.
\begin{figure}[htbp]
   \centering
   \includegraphics[height=2.1in,keepaspectratio=true,width=0.5\textwidth]{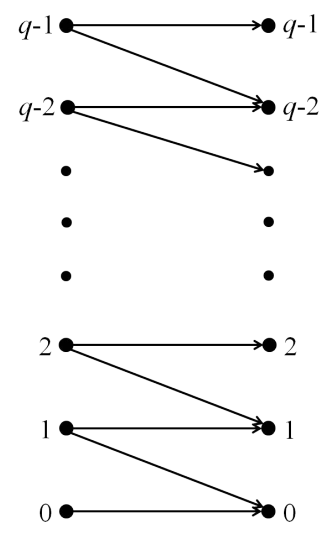}
   \caption{q-ary Z-channel model}
   \label{fig:qZcahnnel}
\end{figure}
A special case of this channel with $q=2$ is the binary Z-channel, which was extensively studied~\cite{Golomb},\cite{Tal}. Both the noisy typewriter channel and the binary Z-channel have analytic expressions for the channel capacity and for the corresponding input distributions. However, analyzing the $q$-ary Z-channel is more complicated, therefore, some numeric solutions must be used~\cite{Shulman}.
The capacity of the generalized probability $p$ noisy typewriter channel (with wrap-around) is given by
\begin{equation}
\mathrm{C} = 1- h_2\left(p\right) \cdot log_q 2,
\end{equation}
where $h_2\left(p\right)$ is the binary entropy function. The capacity is achieved by the uniform input distribution. However, due to the symmetry break in the $q$-ary Z-channel (without wrap-around), calculating the capacity is more complicated, and the mutual information is given by
\begin{gather}
I\left(X;Y\right) = H\left(Y\right) - H\left(Y|X\right)= \\ \nonumber
-\sum_{y} p\left(y\right) log_2 p\left(y\right) -\left(1-p\left(x=0\right)\right)\cdot h_2\left(p\right)
\end{gather}
while $p\left(y\right)$ is given by
\begin{equation} \label{eq:mutual}
p\left(y\right)=\left\{
  \begin{array}{ll}
     p\left(x=0\right)+p \cdot p\left(x=1\right) & y=0 \\
     \left(1-p\right) \cdot p\left(x=q-1\right) & y=q-1 \\
     \left(1-p\right) \cdot p\left(x=y\right)+p \cdot p\left(x=y+1\right) & else \\
  \end{array}
\right.
\end{equation}
We use optimization methods to find the input distribution $p\left(x\right)$ that maximizes the mutual information given in~\eqref{eq:mutual} and achieves the capacity of the channel. Fig.~\ref{fig:InputDist} presents the input distribution (for $q=7$) for several values of the error probability $p$. As can be noticed, when $p$ is very low, the capacity-achieving input distribution is relatively uniform. As $p$ increases, the lowest and highest levels become more probable. When $p$ is relatively high, the probability for the even levels increases and the probability for the odd levels decreases. When $p=0.5$, we get the well known zero-error code for the noisy typewriter channel called the all-even code previously in this paper. This only happens for odd $q$ values; for even $q$ the distribution converges to something non-uniform.\\
\begin{figure}[htbp]
   \centering
   \includegraphics[height=2.1in,keepaspectratio=true,width=0.5\textwidth]{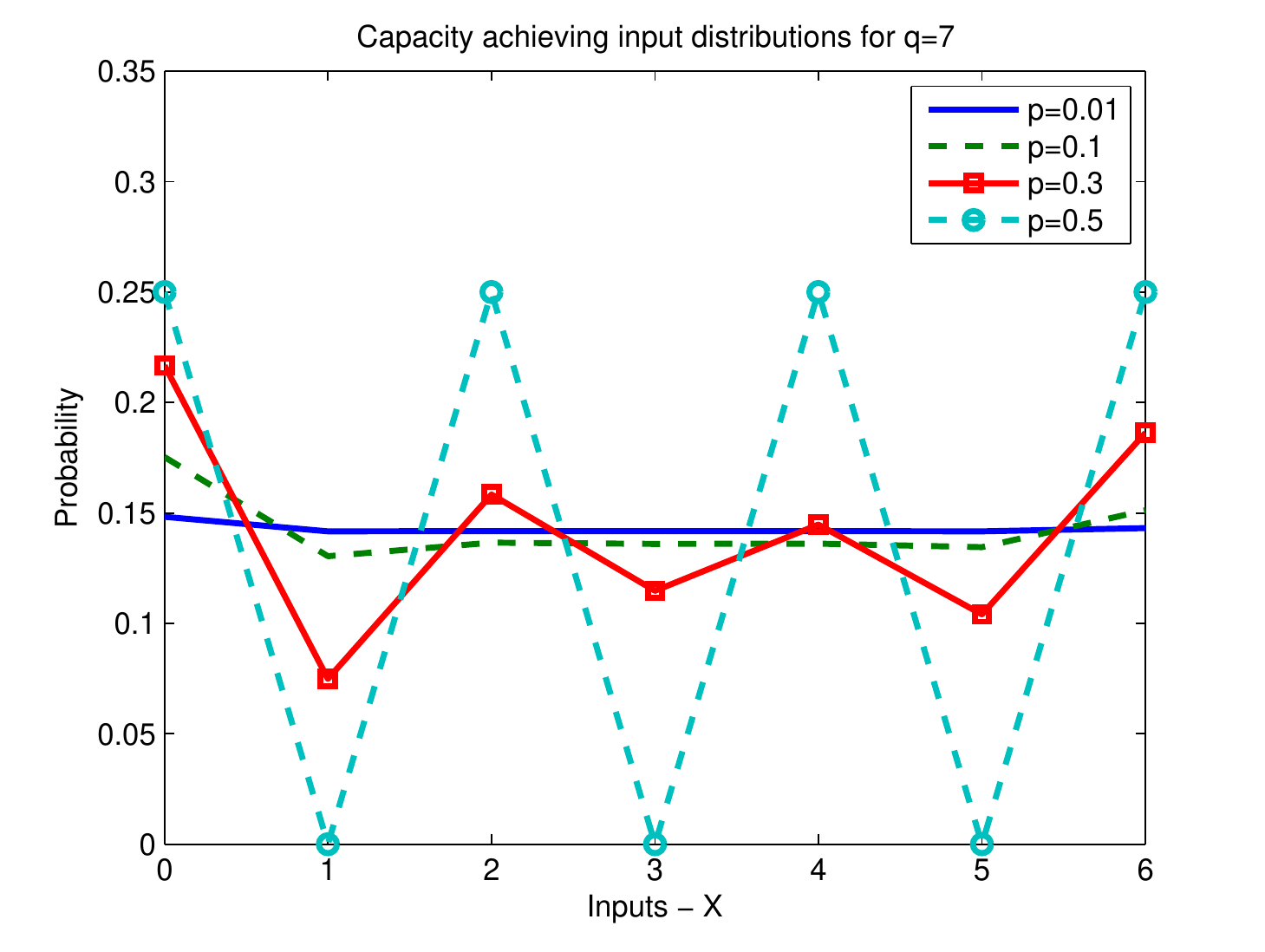}
   \caption{Capacity achieving input distributions for the q-ary Z-channel with q=7 and error probabilities of: $p=0.01$ - solid, $p=0.1$ - dashed, $p=0.3$ - solid \& squares, $p=0.5$ - dashed \& circles.}
   \label{fig:InputDist}
\end{figure}
When we examine the symbol distribution of the NCC code, we see that it follows the capacity-achieving distributions of Fig.~\ref{fig:InputDist} in the following sense. An NCC code with length $n$ has a symbol distribution very close to the capacity-achieving distribution of the $q$-ary Z-channel with some error probability $p$. ``Very close'' means that the mutual information resulting from the NCC distribution is found numerically to be less than $0.01\%$ lower than the capacity. Note that this fact does {\em not} imply that the length $n$ NCC code achieves capacity for the corresponding probability $p$ channel. But Fig.~\ref{fig:PerfImprov} does give an indication that length $n$ NCC codes are well tuned to the corresponding probability $p$ channels. In the figure we plot for three lengths of NCC codes, $n=5,7,10$, the improvement in block-error rate (over uncoded) for each error probability $p$. On each plot we mark by a dot the $p$ parameter whose capacity-achieving distribution is closest to the NCC distribution. It can be seen that the best empiric error-correction performance of the NCC is achieved very close to these $p$ values found by the information-theoretic analysis.\\
\begin{figure}[htbp]
   \centering
   \includegraphics[height=2.1in,keepaspectratio=true,width=0.5\textwidth]{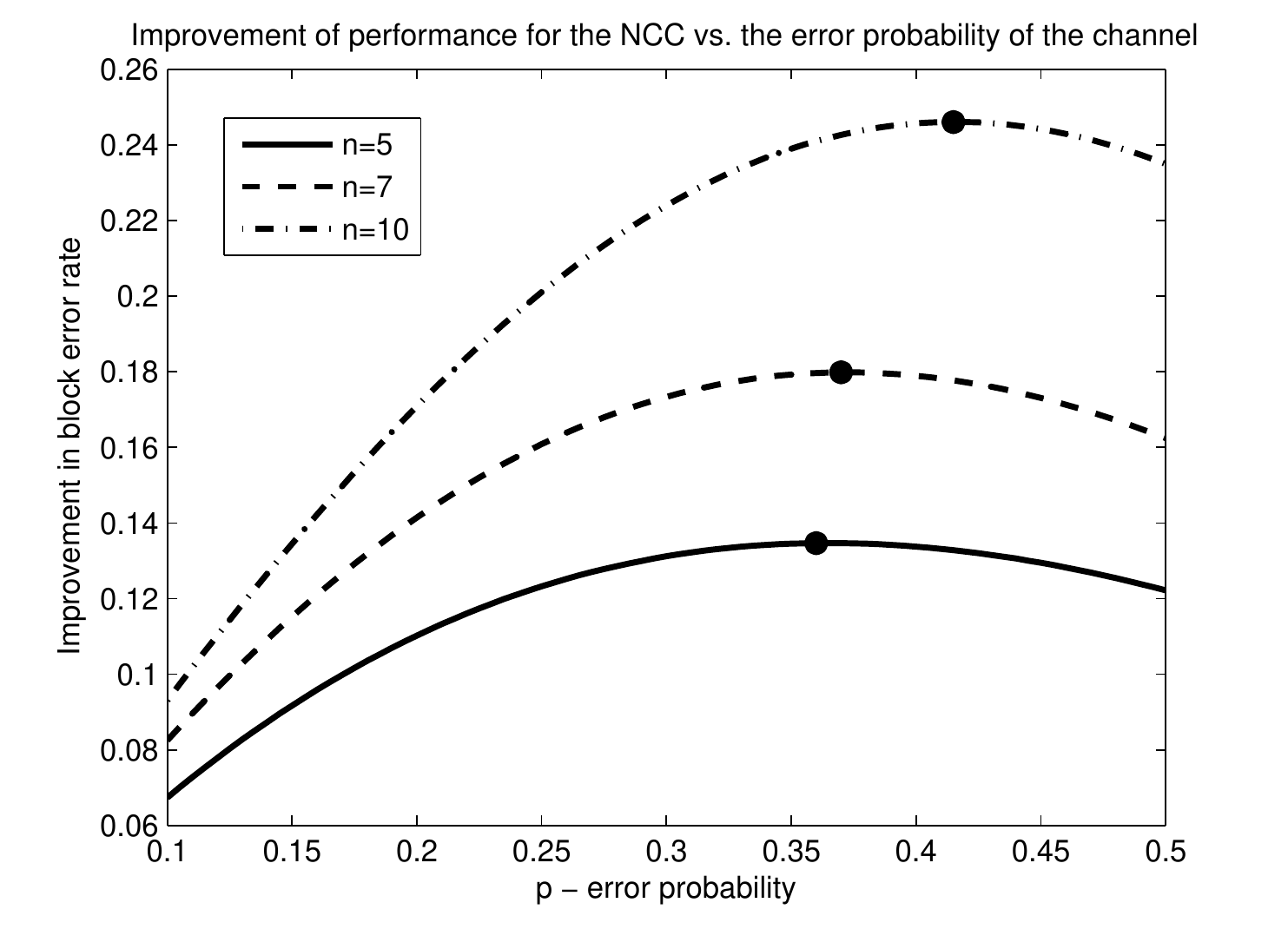}
   \caption{Block error rate improvent of the NCC vs. the error probability of the channel $p$. Solid line: $n=5$, dashed line: $n=7$, dash-dotted line: $n=10$. The dot on each plot marks the channel parameter $p$ for which the NCC distribution is closest to the capacity-achieving distribution.}
   \label{fig:PerfImprov}
\end{figure}
\subsection{Finite block-length analysis}\label{subsec:FBL}
NCC codes are attractive to use for finite lengths $n$ that depend on the channel parameter $p$. In the asymptotic regime where $n$ tends to infinity, it was shown in Section~\ref{sec:NCC} that the NCC code converges to the even/odd code. Therefore, in order to evaluate the performance of the NCC code, we need to use finite block-length analysis tools. In the evaluation we use known results from the well-developed finite block-length information theory. In particular, we use the finite block-length upper bound (converse bound) first introduced by Strassen~\cite{Strassen}, and further refined and extended by Polyanskiy, Poor and Verdu~\cite{Verdu}, Tomamichel and Tan~\cite{Tom}, and by Moulin~\cite{Moulin}.
We start with some formal definitions~\cite{Verdu},\cite{Tom}.
\begin{definition}
A $\left(n,M,\varepsilon\right)-code$ code is a length-$n$ code with $M$ codewords, for which there exists a decoder whose block error probability is smaller than $\varepsilon$.
\end{definition}
The maximal code size achievable with block-length $n$ and error probability $\varepsilon$ is denoted by
\begin{equation}
M^*\left(n,\varepsilon\right) = \max \left\{ M: \exists \: a \, \left(n,M,\varepsilon \right)-code \right\}. \nonumber
\end{equation}
Given that the capacity achieving input distribution is unique, the dispersion of the channel is defined as
\begin{equation}
V \triangleq
  \begin{array}{ll}
     Var\left[i\left(X,Y\right)| X\right],
   \end{array}
\end{equation}
where $i\left(X,Y\right)$ is the information density given by
\begin{equation}
i\left(X,Y\right) = \log \frac{P_{Y|X}\left(y|x\right)}{P_Y\left(y\right)}.
\end{equation}
\begin{theorem} \cite{Strassen}-\cite{Moulin} \label{converse}
For every discrete memoryless channel (DMC) and $\varepsilon$ with $V \geq 0$, the following bound applies
\begin{equation}\label{eq:converse}
\log M^*\left(n,\varepsilon\right)\leq nC - \sqrt{nV}\Phi^{-1}\left(\varepsilon\right) + \frac{1}{2}\log n + \cO\left(1\right),
\end{equation}
where $C$ is the Shannon capacity of the channel, and $\Phi \left(\cdot\right)$ is the Gaussian cumulative distribution function.
\end{theorem}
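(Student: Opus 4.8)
The plan is to present Theorem~\ref{converse} as a specialization of the general finite block-length (dispersion) converse for discrete memoryless channels to the $q$-ary Z-channel of Fig.~\ref{fig:qZcahnnel}, whose capacity $C$, dispersion $V$, and (assumed unique) capacity-achieving input distribution have just been set up. The argument proceeds in three stages: a one-shot hypothesis-testing converse, a central-limit approximation of the information density, and a third-order refinement, all of which are supplied by the cited works \cite{Strassen}--\cite{Moulin}.

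First I would pass from channel coding to binary hypothesis testing via the meta-converse. For any $(n,M,\varepsilon)$-code and any auxiliary output law $Q_{Y^n}$, the optimal type-II error $\beta_{1-\varepsilon}$ of a level-$(1-\varepsilon)$ test satisfies $\beta_{1-\varepsilon}\le 1/M$, so that $\log M\le-\log\beta_{1-\varepsilon}$. The natural choice is $Q_{Y^n}=\prod_{k=1}^n P_Y^*$, the product of the unique capacity-achieving output distribution; then the relevant log-likelihood ratio is exactly the information density $i(X^n;Y^n)=\sum_{k=1}^n i(X_k;Y_k)$, a sum of i.i.d. terms under each transmitted codeword.

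Second, I would compute the asymptotics of $-\log\beta_{1-\varepsilon}$ by a Berry--Esseen approximation of this sum. Because the capacity-achieving input makes $D\!\left(P_{Y|X=x}\,\|\,P_Y^*\right)=C$ for every input symbol $x$ in its support (the KKT/saddle-point conditions for capacity), each $i(X_k;Y_k)$ has mean $C$ and variance $V$; the normalized sum is then Gaussian to within an $\cO(n^{-1/2})$ error, and inverting the tail at level $\varepsilon$ yields the leading terms $nC-\sqrt{nV}\,\Phi^{-1}(\varepsilon)$. The assumed uniqueness of the optimal input is what collapses the outer optimization over input distributions onto this single choice and lets the variance be read off as the scalar $V$.

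The hard part is the refined logarithmic term $\tfrac12\log n$: a direct Berry--Esseen estimate leaves only an $\cO(1)$ slack and cannot resolve this coefficient. Pinning it down requires the sharper expansions of Tomamichel--Tan and Moulin, which track the next-order behavior of $\beta_{1-\varepsilon}$---treating separately the lattice and non-lattice cases of the information-density distribution and controlling the remainder uniformly over input types. Since these expansions hold for any DMC with a well-defined $C$, a finite $V$, and a unique capacity-achieving distribution, the remaining work is merely to verify these regularity hypotheses for the channel of Fig.~\ref{fig:qZcahnnel} and then invoke \cite{Strassen}--\cite{Moulin}.
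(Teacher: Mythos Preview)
The paper does not prove Theorem~\ref{converse}; it simply quotes the bound with attribution to \cite{Strassen}--\cite{Moulin} and then applies it numerically to the $q$-ary Z-channel. So there is no ``paper's own proof'' to compare against: the authors treat~\eqref{eq:converse} as a black-box result from the finite block-length literature.

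Your proposal is a reasonable high-level sketch of how the cited references establish the bound (meta-converse $\to$ Berry--Esseen on the information density $\to$ third-order refinement), and it correctly identifies that the $\tfrac12\log n$ term is the delicate part requiring the Tomamichel--Tan and Moulin analyses rather than a bare Berry--Esseen estimate. But since the paper invokes the theorem without proof, the appropriate response here is not to supply a proof at all: the statement is cited, not derived, and your write-up should mirror that.
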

The channel dispersion $V$ of the $q$-ary Z-channel can be calculated using the capacity-achieving input distributions found by optimization tools in Section~\ref{subsec:z_ch}.
\begin{figure}[htbp]
   \centering
   \includegraphics[height=2.1in,keepaspectratio=true,width=0.5\textwidth]{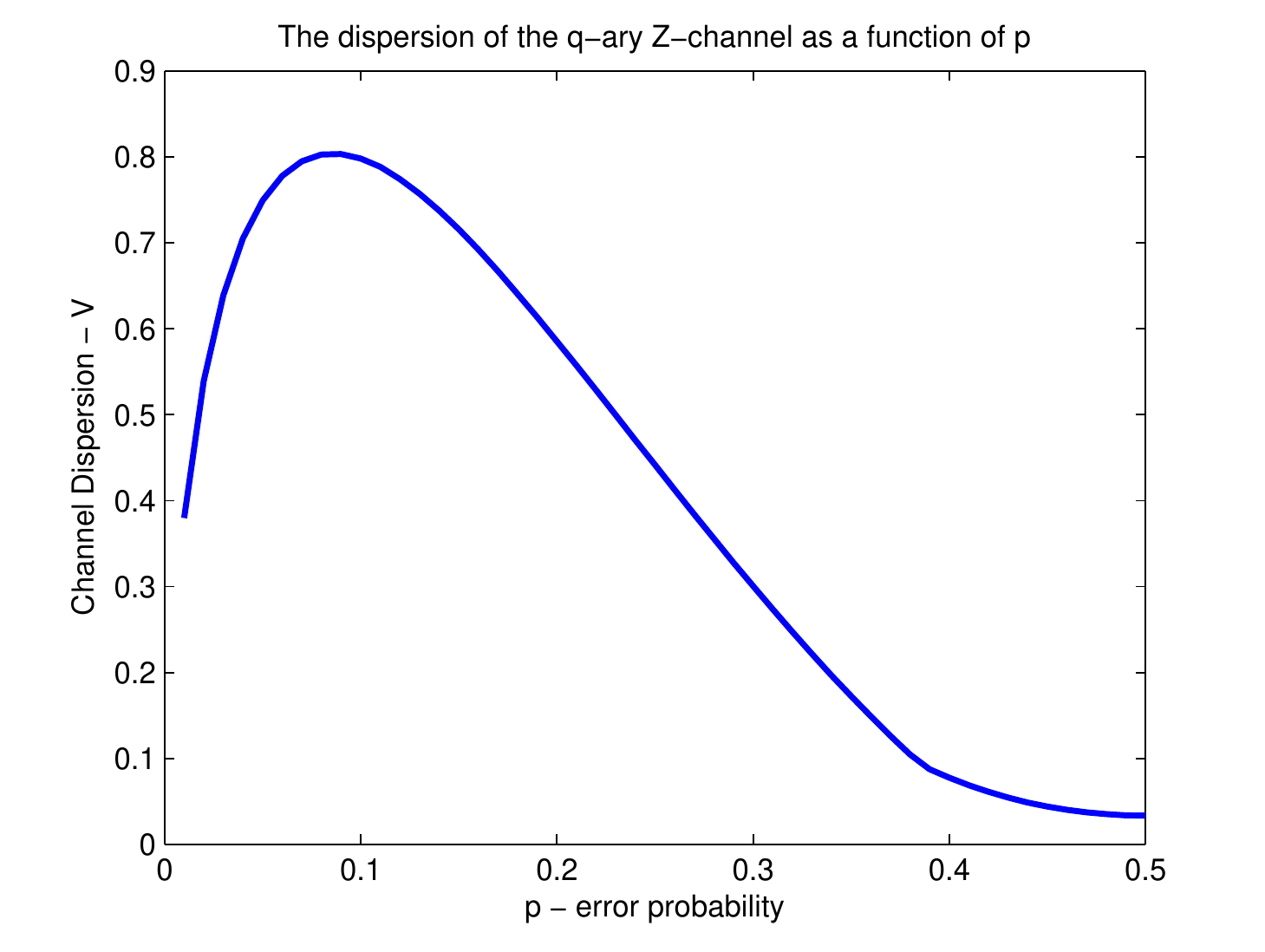}
   \caption{Channel dispersion $V$ of the $q$-ary Z-channel as a function of the error probability $p$.}
   \label{fig:dispersion}
\end{figure}
Fig.~\ref{fig:dispersion} presents the channel dispersion as a function of the error probability of the channel. After finding $V$,~\eqref{eq:converse} can be fully calculated.
Comparing the NCC scheme with Theorem~\ref{converse} is not straightforward and several remarks must be made. First, the fourth element of~\eqref{eq:converse} may not be negligible for small values of $n$. Second, as can be seen in Table~\ref{tb:empiric}, changing the block-length of the NCC inherently changes the error correction capability. Hence, the $n$ and $\varepsilon$ parameters are conjugated in the NCC scheme - meaning that for a given $q$-ary Z-channel with error probability $p=0.1$, using different block-lengths yields different $\varepsilon$'s. So, the way we compared the NCC to Theorem~\ref{converse} is as follows. We fixed the error probability of the channel to $p=0.1$. Then we calculated (by exhaustive simulations) the resulting $\varepsilon$'s for four NCC block-lengths: $7,9,13,17$ (calculating it for higher values of $n$ demands serious computational resources). The calculated $\varepsilon$'s are $[0.0686,0.0407,0.0144,0.0054]$, respectively. Then, we calculated~\eqref{eq:converse} for the four different values of $\varepsilon$.
\begin{figure}[htbp]
   \centering
   \includegraphics[height=2.1in,keepaspectratio=true,width=0.5\textwidth]{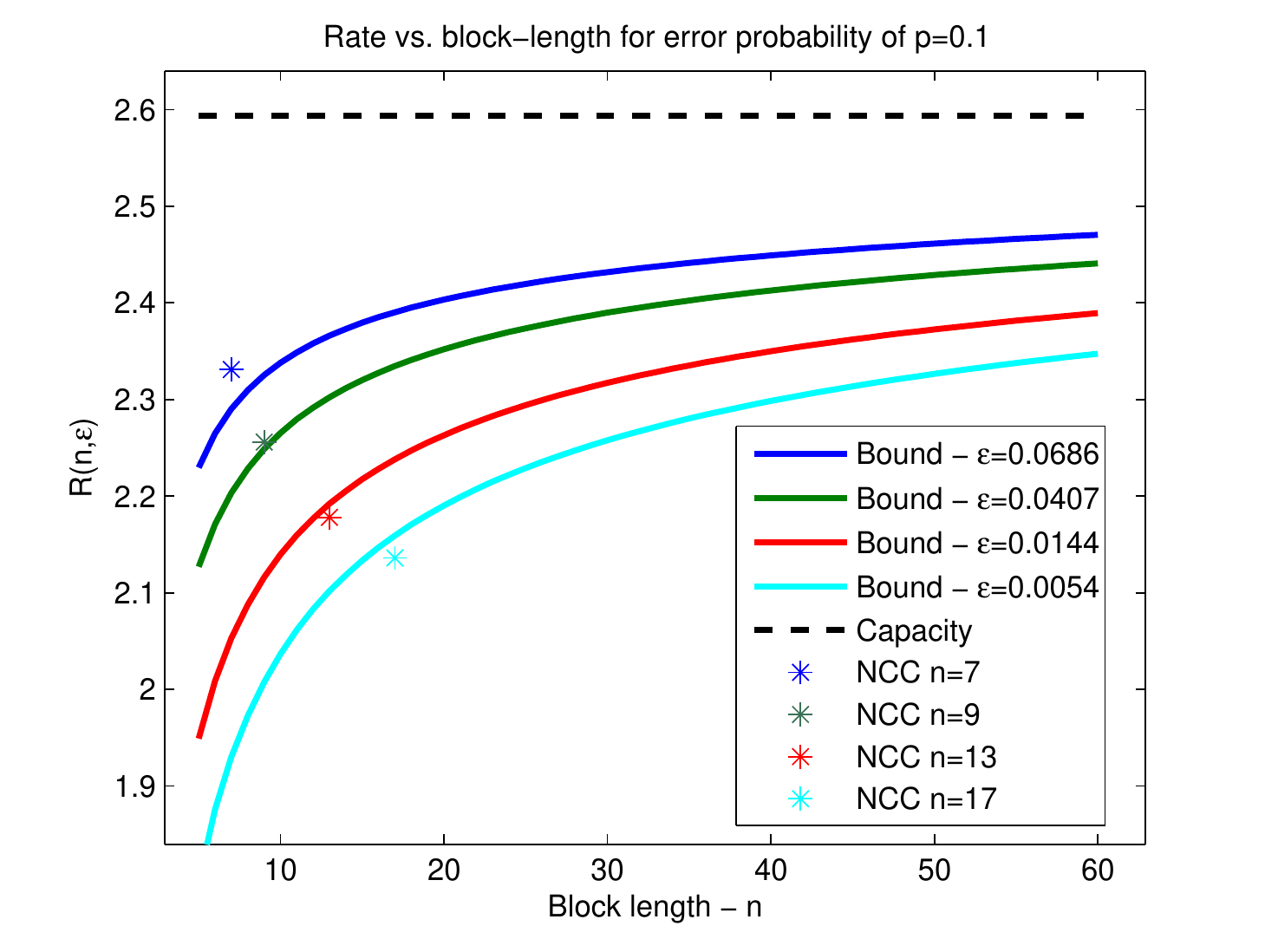}
   \caption{Solid lines - converse bounds for four values of $\varepsilon$, $[0.0686,0.0407,0.0144,0.0054]$. Stars - rates of the NCC for four values of $n$, $[7,9,13,17]$. Dashed horizontal line - the capacity of the channel for $p=0.1$.}
   \label{fig:Converse}
\end{figure}
Fig.~\ref{fig:Converse} presents the comparison between the performance of the NCC and the converse bounds. As can be seen, for low values of $n$, the NCC beats the converse bound, which is possible because of the $\cO\left(1\right)$ term in~\eqref{eq:converse}. As $n$ increases, we can see that the NCC nicely follows the bound, reaching rates that are $<1\%$ lower than the converse bounds. Therefore, the NCC code has the potential to be very close to optimal finite-block length for moderate block-length.

\section{The NCC as an ECC Scheme in Flash Memories} \label{sec:flash}
Approaching the end of this paper, we want to venture beyond the theoretical discussion, and project the NCC scheme on the space of real flash storage devices. Given the attractiveness of the NCC in correcting asymmetric magnitude-1 errors (demonstrated in Fig.~\ref{fig:empiric1}), it is natural to ask how to deploy it in real flash devices. In terms of computational complexity, we have provided in Section~\ref{sec:dec2} efficient algorithms for encoding and decoding NCC codewords. Thus there are two main issues left to resolve:
\begin{enumerate}
\item Coding over block lengths induced by large flash pages.
\item Correcting the residual errors at the decoder output.
\end{enumerate}
It turns out that accomplishing item 1 is very easy, while item 2 is less immediate and is also a topic for further research.

To accommodate large flash pages (also called {\em wordlines} in the flash jargon), recall that the NCC block length $n$ determines both the code rate (decreasing with $n$) and the correction capability (improving with $n$). Typical block lengths shown to be useful/advantageous are from $n=5$ to $n\approx 30$. If we have flash pages of size $N \gg n$, then we simply concatenate $\lceil N/n \rceil$ parallel NCC codewords in each flash page, where the choice of $n$ is done based on the SER estimate and correction specifications for the device. A pictorial illustration is shown in Fig.~\ref{fig:block}. If the parameter $n$ is correctly chosen, each codeword will decode correctly, and a correct full page will be delivered to the user. By that we decouple the codeword length $n$ from the page size $N$.
\begin{figure}[htbp]
   \centering
   \includegraphics[height=2.1in,keepaspectratio=true,width=0.5\textwidth]{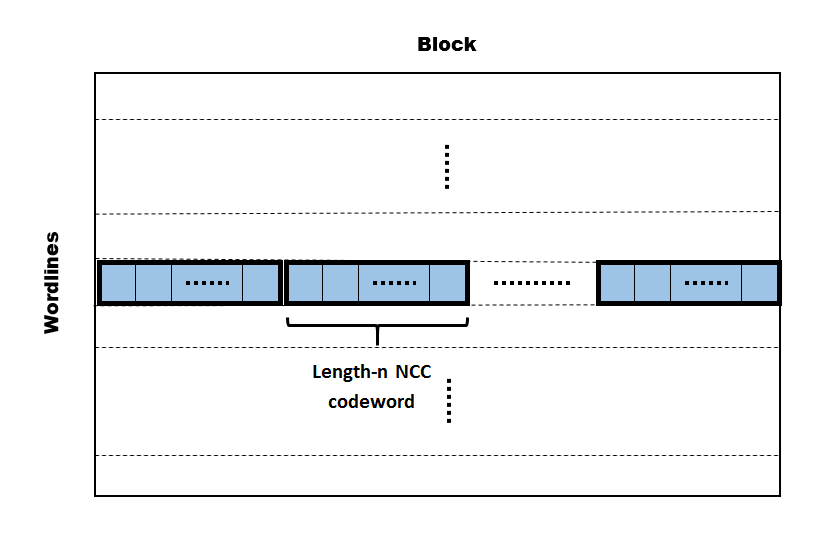}
   \caption{ A memory block consisting of pages (wordlines), where each page includes concatenated length-$n$ NCC codewords.}
   \label{fig:block}
\end{figure}

To deal with residual errors at the NCC decoder output, we can concatenate it with an outer code for symmetric errors such as Reed-Solomon (RS) or LDPC code. If we use a RS code with alphabet size $M$ (the number of NCC codewords), then by standard concatenation an NCC decoding error will translate to a single symbol error in the RS code. Low decoding error probability can be achieved by the concatenated code so long that the NCC decoding-error probability (Table~\ref{tb:empiric}) matches the correction capability of the RS code. For example, if we take an NCC code with $n=13$ on input SER $0.095$, then the resulting output SER is $0.0021$, which means that most of the NCC blocks will be decoded successfully, and then a high-rate RS code will suffice as the outer code. The more challenging aspect of residual errors is the treatment of secondary error types (e.g. the ones treated in~\cite{Yue1}) in conjunction to the dominant asymmetric magnitude-1 errors. If these secondary errors are frequent enough to affect a significant fraction of NCC blocks, then the suggested concatenation with RS codes will not work, since many RS symbols would have residual errors left by the NCC decoder (which was not designed to combat these errors). In such cases, an alternative is to concatenate the NCC with an outer binary code, e.g. an LDPC code, and iterate information between the LDPC and NCC decoders such that the most likely codeword is found. That way, the redundancy from the LDPC code can help the NCC decoder prefer a more likely codeword, even if not all errors are asymmetric magnitude-1. Devising such a decoder is a topic for future work.

In the following example we demonstrate the potential benefit of concatenating an NCC code with an outer LDPC code, such that the symbols left in error by the NCC decoder (whose number is measured as SER in Fig.~\ref{fig:empiric1}) are corrected by the LDPC code. The example shows that the SER reduction in asymmetric magnitude-1 errors offered by the NCC code allows using a low complexity LDPC decoder for the outer code, while alternative asymmetric magnitude-1 codes leave high output SER that cannot be corrected by an LDPC code of reasonable redundancy and complexity. We note that this example is for a quantitative illustration only, and not yet a working scheme, because it assumes that a single residual asymmetric magnitude-1 error maps to a single bit error at the LDPC decoder input. This assumption is not readily true, because of the non-trivial mapping between outer-code bits and NCC codewords (shown in Section~\ref{sec:encoder}).

\begin{example}
The technical requirement for reliability of most flash memories stands on a bit-error rate (BER) of $10^{-15}$. Let us assume a $q=8$ flash memory suffers asymmetric magnitude-$1$ errors such that its raw (input) SER is $0.24$. The raw SER of $0.24$ is mapped\footnote{We approximate a 1-level shift as a 1-bit error, as given e.g. by gray coding.} to raw BER of approximately $0.08$, which is the SER level divided by the number of bits per symbol ($\log_2{q}=3$). An ECC scheme should be applied in order to reduce the relatively high raw BER of $0.08$ to a BER of $10^{-15}$, which is required by the consumer. Let us consider using state of the art LDPC codes designed for flash memories presented in~\cite{LDPC} (Fig. 10). We can see that for a raw BER of $0.08$ none of the LDPC codes is useful. In fact, in order to reach a BER of $10^{-15}$ the raw BER of the strongest LDPC should be at most $0.013$. Let us now examine concatenating the LDPC codes with asymmetric magnitude-$1$ codes as inner codes. The asymmetric version of the BCH code reduces the raw BER of $0.08$ to $0.035$, which is not enough for the LDPC to converge. The even/odd scheme is slightly better, reducing the BER to $0.0203$, which is still insufficient. The NCC, however, reduces the raw BER of $0.08$ to $0.0065$, which can enable a soft LDPC with just two extra sensing levels to reach the required BER of $10^{-15}$.
\end{example}

\section{Conclusion and Future Work}
The NCC flexible coding scheme for asymmetric magnitude-$1$ errors was presented. An optimal, low complexity decoding algorithm was introduced. The NCC was also analyzed and compared to BCH and even/odd codes, outperforming these codes for moderate to high values of SER. Several bounds on the error-correcting performance were derived. The NCC was also analyzed by finite block length analysis tools, indicating that this scheme has the potential to be nearly optimal. \\
For future work, it is possible to enhance the flexibility of the constraint and to expand the rate/correcting capabilities tradeoff. Furthermore, in order to address high error propagation when concatenating the NCC with some outer code, a more sophisticated and efficient concatenation method can de developed. On the theoretical side, developing sphere-packing bounds for the $q$-ary Z-channel may suggest additional analyses and refinements of the NCC. In addition, aside of correcting errors, the NCC can also be used for speeding up read and write processes in non-volatile memories.
\section{Acknowledgment}
The authors wish to thank Y. Polyanskiy for his guidance on the finite block-length analysis. This work was supported by the Israel Science Foundation, by the Israel Ministry of Science and Technology, and by a GIF Young Investigator grant.


\appendices
\section{proof of Theorem~\ref{th:proofApp}} \label{app:A}
We prove Theorem~\ref{th:proofApp} for the common scenario in which $q$ is an even number. The proof for the odd $q$ scenario is straightforward. This result is proved with the help of the following lemmas.
\begin{definition}
A subcode of the NCC, $NCC_r\left(n,q\right)$ is a code containing all the codewords of $NCC\left(n,q\right)$ that occupy exactly $r$ levels. It is possible to define in a similar way the subcode $even/odd_r\left(n,q\right)$ of the even/odd code.
\end{definition}
\begin{lemma}
The information rates of $NCC_{q/2}\left(n,q\right)$ and $even/odd_{q/2}\left(n,q\right)$ are identical when $\cellblocksize\rightarrow \infty$.
\label{prop:ncc_eq_evenodd1}
\end{lemma}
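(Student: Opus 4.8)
The plan is to count the two subcodes exactly and show their sizes differ only by a constant multiplicative factor, so that after applying $\log_q$ and dividing by $n$ the gap is $\cO(1/n)$ and vanishes as $n\to\infty$. The whole argument reduces to a single cancellation.

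First I would count $NCC_{q/2}\left(n,q\right)$. From the proof of Theorem~\ref{th:NCCRate}, the number of NCC codewords occupying \emph{exactly} $k$ levels is the $k$-th summand of~\eqref{eq:NCCrate}, namely $k!\cdot S(n,k)\cdot\binom{q-k+1}{k}$, since $\binom{q-k+1}{k}$ counts the placements of $k$ non-consecutive occupied levels and $k!\,S(n,k)$ counts the surjections of the $n$ cells onto them. Specializing to $k=q/2$ and using $\binom{q/2+1}{q/2}=q/2+1$, I obtain $\left|NCC_{q/2}\left(n,q\right)\right|=\left(\tfrac{q}{2}\right)!\,S\!\left(n,\tfrac{q}{2}\right)\left(\tfrac{q}{2}+1\right)$.

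Next I would count $even/odd_{q/2}\left(n,q\right)$. The key observation is that the alphabet contains exactly $q/2$ even levels and $q/2$ odd levels, so an even/odd codeword that occupies exactly $q/2$ distinct levels must saturate one full parity class: it uses all even levels or all odd levels. For each of these two choices the number of surjective assignments of the $n$ cells onto the chosen $q/2$ levels is $\left(\tfrac{q}{2}\right)!\,S\!\left(n,\tfrac{q}{2}\right)$, so $\left|even/odd_{q/2}\left(n,q\right)\right|=2\left(\tfrac{q}{2}\right)!\,S\!\left(n,\tfrac{q}{2}\right)$. Finally, forming the two information rates per Definition~\ref{df:rate} and subtracting, the common surjection factor cancels and I am left with
\[
\cR_{NCC_{q/2}}-\cR_{even/odd_{q/2}}=\frac{1}{n}\log_q\frac{\tfrac{q}{2}+1}{2},
\]
which is of order $1/n$; hence the two rates coincide in the limit $n\to\infty$.

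There is essentially no hard obstacle here. The only two points requiring care are reading the per-$k$ term correctly off the sum in~\eqref{eq:NCCrate} (as opposed to the full sum), and justifying that ``exactly $q/2$ occupied levels'' forces an even/odd codeword to fill an entire parity class — both of which follow immediately from the fact that each parity class has exactly $q/2$ levels. The rest is the one-line cancellation displayed above.
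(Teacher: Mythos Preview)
Your proof is correct and follows essentially the same approach as the paper: both arguments observe that $NCC_{q/2}(n,q)$ and $even/odd_{q/2}(n,q)$ differ in size by the constant multiplicative factor $(q/2+1)/2$, so the rate gap is $\tfrac{1}{n}\log_q\tfrac{q/2+1}{2}\to 0$. The paper is slightly terser, letting $M$ denote the even/odd count and writing the NCC count directly as $\tfrac{q/2+1}{2}M$, whereas you spell out both counts explicitly via the surjection factor $(q/2)!\,S(n,q/2)$.
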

\begin{proof}
From Theorem~\ref{th:NCCRate} we know that when the number of occupied levels in a codeword is $q/2$, which is the maximal value of occupied levels, there are $q/2+1$ combinations of levels for an NCC codeword, while there are only two for the even/odd code ($[0,2,\ldots,q-2]$ or $[1,3,\ldots,q-1]$). Let us denote by $M$ the number of $even/odd_{q/2}\left(n,q\right)$ codewords. The number of $NCC_{q/2}\left(n,q\right)$ codewords is therefore $\frac{q/2+1}{2}M$, so the information rate in this case is
\begin{equation}
\cR_{NCC}= log_q\left(\frac{q/2+1}{2}M\right)/n = \cR_{even/odd} + log_q\left(\frac{q/2+1}{2}\right)/n,
\end{equation}
Therefore, it is clear that when $q/2$ levels are occupied and $\cellblocksize\rightarrow \infty$ the information rates of the NCC and even/odd schemes are identical.
\end{proof}
\begin{lemma}
When we uniformly select an $NCC(\cellblocksize,q)$ codeword, the probability $Prob\left(\cellblocksize,q/2\right)$ for selecting a $NCC_{q/2}\left(n,q\right)$ codeword is asymptotically $1$ when $\cellblocksize\rightarrow\infty$. In other words, when $\cellblocksize$ is large, the subcode $NCC_{q/2}\left(n,q\right)$ and the $NCC\left(n,q\right)$ code coincide.
\label{prop:ncc_eq_evenodd2}
\end{lemma}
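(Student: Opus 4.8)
The plan is to express the desired probability as a ratio of codeword counts and then exploit the exponential separation between the term indexed by $k=q/2$ and all smaller values of $k$. By Theorem~\ref{th:NCCRate}, the number of $NCC(\cellblocksize,q)$ codewords occupying exactly $k$ levels is $N_k = k!\cdot S(n,k)\cdot\binom{q-k+1}{k}$, so that the probability of selecting an $NCC_{q/2}(n,q)$ codeword is the ratio $N_{q/2}/\sum_{k=1}^{q/2} N_k$. The whole argument then reduces to showing that this single term asymptotically dominates the full sum.

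First I would invoke the asymptotic $S(n,k)\sim k^n/k!$ for fixed $k$ as $n\to\infty$, already used in the proof of Theorem~\ref{th:asym}, which yields $N_k\sim k^n\binom{q-k+1}{k}$. The crucial observation is that the exponential base $k$ is largest at $k=q/2$, the maximal number of occupiable non-consecutive levels, so the $k=q/2$ term carries the dominant growth.

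Next I would bound, for each $k<q/2$, the ratio
\[
\frac{N_k}{N_{q/2}} \sim \left(\frac{2k}{q}\right)^{\!n}\cdot\frac{\binom{q-k+1}{k}}{q/2+1},
\]
and observe that $2k/q<1$ whenever $k<q/2$, so each such ratio decays geometrically to $0$. Since there are only $q/2-1$ subdominant terms, a number fixed independently of $n$, their sum also vanishes, and therefore $Prob(n,q/2)=\bigl(1+\sum_{k<q/2} N_k/N_{q/2}\bigr)^{-1}\to 1$ as $n\to\infty$.

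The main obstacle is essentially bookkeeping rather than conceptual: I must ensure the $\sim$ relations may be combined term-by-term inside a finite sum, which is legitimate precisely because the number of summands is fixed in $n$, and that the binomial factors $\binom{q-k+1}{k}$, being constants independent of $n$, cannot counteract the geometric decay $(2k/q)^n$. Once the Stirling asymptotics are in place, the geometric-domination argument is routine, and the conclusion that $NCC_{q/2}(n,q)$ asymptotically exhausts $NCC(n,q)$ follows immediately. This lemma, together with Lemma~\ref{prop:ncc_eq_evenodd1}, then gives Theorem~\ref{th:proofApp}, since the full code's rate is asymptotically governed by its $q/2$-level subcode, whose rate coincides with that of even/odd.
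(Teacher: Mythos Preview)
Your proposal is correct and follows essentially the same approach as the paper: both write the probability as the ratio of the $k=q/2$ term to the full sum from Theorem~\ref{th:NCCRate}, invoke the asymptotic $S(n,k)\sim k^n/k!$, and then observe that each subdominant ratio behaves like $(2k/q)^n\cdot\binom{q-k+1}{k}/(q/2+1)\to 0$ since $2k/q<1$ and the binomial factors are constants independent of $n$. Your explicit remark that the finite number of summands legitimizes term-by-term use of the $\sim$ relations is a nice touch that the paper leaves implicit.
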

\begin{proof}
From the proof of Theorem~\ref{th:NCCRate} it is clear that the probability $Prob\left(\cellblocksize,q/2\right)$ to choose $\cellblocksize$-length, $q/2$ occupied levels codeword is given by:
\begin{equation}
Prob\left(\cellblocksize,\frac{q}{2}\right) = \frac{\frac{q}{2}!\cdot S\left(n,\frac{q}{2}\right) \cdot \left(\frac{q}{2}+1\right) }{\sum_{k=1}^{\frac{q}{2}} k!\cdot S\left(n,k\right) \cdot \binom{q-k+1}{k}}
\end{equation}
Recall that asymptotically $S\left(n,k\right)\sim \frac{k^n}{k!}+\cO\left((k-1)^n\right)$, therefore we get
\begin{equation}
Prob\left(\cellblocksize,\frac{q}{2}\right) \sim \frac{\left(\frac{q}{2}\right)^n \cdot \left(\frac{q}{2}+1\right) }{\sum_{k=1}^{\frac{q}{2}-1} k^n \cdot \binom{q-k+1}{k} + \left(\frac{q}{2}\right)^n \cdot \left(\frac{q}{2}+1\right)}
\end{equation}
After some algebra we get:
\begin{equation}
Prob\left(\cellblocksize,\frac{q}{2}\right) \sim \frac{1 }{\sum_{k=1}^{\frac{q}{2}-1} \left(\frac{2k}{q}\right)^n \cdot \binom{q-k+1}{k}/\left(\frac{q}{2}+1\right) + 1}
\end{equation}
Due to the fact that for the relevant values of $k$, we get $\frac{2k}{q}<1$ and that $\binom{q-k+1}{k}/\left(\frac{q}{2}+1\right)$ is bounded and does not depend on $n$ it is obvious that
\begin{equation}
\lim _{n\rightarrow\infty} Prob\left(\cellblocksize,\frac{q}{2}\right) =1
\end{equation}
\end{proof}
When $n\rightarrow\infty$ it is clear by~\eqref{eq:EvenOddrate} that $\cR_{even/odd} \rightarrow \cR_{all-even}$. From Lemmas~\ref{prop:ncc_eq_evenodd1} and~\ref{prop:ncc_eq_evenodd2} it is clear that the information rates of the NCC and even/odd are the same.

\end{document}